\RequirePackage[l2tabu, orthodox]{nag} 

\newif\ifArxiv
\Arxivtrue

\newif\ifExtended
\Extendedtrue

\ifArxiv
\Extendedtrue
\fi

\documentclass{AIMS}

\usepackage[utf8]{inputenc}
\usepackage[english]{babel}  
\usepackage[T2A,T1]{fontenc} 
\usepackage{amsmath,amssymb} 
\usepackage{graphicx}        
\usepackage{color}           
\usepackage{varioref}        
\usepackage{xspace}          
\usepackage{cite}            
\usepackage{multirow}        
\usepackage{booktabs} 
\usepackage{stmaryrd} 
\usepackage{algorithm}
\usepackage{algpseudocode}
\usepackage{microtype} 
\usepackage{hyperref}
\usepackage{nicefrac}
\hypersetup{urlcolor=blue, citecolor=red}

\usepackage{enumitem} 
\usepackage{mathtools} 

\ifExtended
\definecolor{navy}{rgb}{0,0,.5}
\else
\definecolor{navy}{rgb}{0,0,0}
\fi

\usepackage[caption=false]{subfig} 
\usepackage{bm}

\usepackage{algorithm}
\usepackage{algpseudocode}

\usepackage{IEEEtrantools}

  \textheight=8.2 true in
   \textwidth=5.0 true in
    \topmargin 30pt
     \setcounter{page}{1}


\newtheorem{theorem}{Theorem}[section]
\newtheorem{corollary}[theorem]{Corollary}
\newtheorem{lemma}[theorem]{Lemma}
\newtheorem{proposition}[theorem]{Proposition}

\newtheorem{problem}[theorem]{Problem}
\theoremstyle{definition}
\newtheorem{definition}[theorem]{Definition}
\newtheorem{remark}[theorem]{Remark}

\usepackage{fancyref}

\makeatletter
\def\mkfancyprefix#1#2{%
\expandafter\def\csname fancyref#1labelprefix\endcsname{#1}%
\begingroup\def\x{\endgroup\frefformat{plain}}%
    \expandafter\x\csname fancyref#1labelprefix\endcsname
    {\MakeLowercase{#2}\fancyrefdefaultspacing##1}%
\begingroup\def\x{\endgroup\Frefformat{plain}}%
    \expandafter\x\csname fancyref#1labelprefix\endcsname
    {#2\fancyrefdefaultspacing##1}%
\begingroup\def\x{\endgroup\frefformat{vario}}%
    \expandafter\x\csname fancyref#1labelprefix\endcsname
    {\MakeLowercase{#2}\fancyrefdefaultspacing##1##3}%
\begingroup\def\x{\endgroup\Frefformat{vario}}%
    \expandafter\x\csname fancyref#1labelprefix\endcsname
    {#2\fancyrefdefaultspacing##1##3}%
}
\makeatother
\fancyrefchangeprefix{\fancyrefeqlabelprefix}{eqn}
\Frefformat{vario}{\fancyrefeqlabelprefix}{(#1)#3}

\mkfancyprefix{ssec}{Section}
\mkfancyprefix{tbl}{Table}
\mkfancyprefix{thm}{Theorem}
\mkfancyprefix{lem}{Lemma}
\mkfancyprefix{cor}{Corollary}
\mkfancyprefix{prop}{Proposition}
\mkfancyprefix{prob}{Problem}
\mkfancyprefix{alg}{Algorithm}
\mkfancyprefix{inv}{Invariant}
\mkfancyprefix{ex}{Example}
\mkfancyprefix{line}{Line}
\mkfancyprefix{def}{Definition}
\mkfancyprefix{itm}{Item}
\mkfancyprefix{app}{Appendix}
\newcommand{\cref}[1]{\Fref{#1}}

\ifArxiv
\else
\AtBeginDocument{}
\fi

\newcommand\ifnull[3]{%
  \ifx\null#1%
    #2%
  \else%
    #3%
  \fi}


\newcommand{\Pade}{Pad\'e\xspace} 


\newcommand{\word}[1]{\mathrm{#1}}
\newcommand\half{\tfrac 1 2}
\newcommand\defeq{\triangleq}            
\newcommand\modop{\mathbin\word{mod}}    
\newcommand\rem{\mathbin\word{\ rem}}      
\newcommand{\mo}{{-1}}                   
\newcommand{\T}[1]{^{(#1)}}              
\renewcommand\vec[1]{\bm{#1}}            
\newcommand{\floor}[1]{\lfloor #1 \rfloor}

\newcommand{\diag}{\mathop{\word{diag}}}

\newcommand{\lcm}{\mathop{\word{lcm}}}

\newcommand{\GS}{\word{GS}}
\newcommand{\Pow}{\word{Pow}}

\newcommand{\LM}{{\rm LM}}
\newcommand{\pivs}{{\rm leads}}

\newcommand\F{\mathbb F\xspace} 
\newcommand\ZZ{\mathbb Z\xspace}

\newcommand\RR{\mathbb R\xspace}
\newcommand{\Code}{\mathcal C}

\newcommand\Oapp{O^{\scriptscriptstyle \sim}\!}
\newcommand\dist{\word{dist}}
\def\fail{\mathsf{fail}}

\makeatletter
\def\easycyrsymbol#1{{\mathord{\mathchoice
  {\mbox{\fontsize\tf@size\z@\usefont{T2A}{cmr}{m}{n}#1}}
  {\mbox{\fontsize\tf@size\z@\usefont{T2A}{cmr}{m}{n}#1}}
  {\mbox{\fontsize\sf@size\z@\usefont{T2A}{cmr}{m}{n}#1}}
  {\mbox{\fontsize\ssf@size\z@\usefont{T2A}{cmr}{m}{n}#1}}
}}}
\makeatother

\newcommand{\rev}[2][\null]{%
\ifnull{#1}{\mathsf{rev}(#2)}{
\mathsf{rev}_{#1}(#2)}}

\newcommand\Errs{\mathcal E} 
\newcommand\errs{\epsilon} 
\newcommand{\ev}[1][]{\word{ev}_{#1}}

\newcommand{\PolyMult}[1]{\mathsf{M}(#1)} 
\newcommand{\MatExp}{\omega} 

\algrenewcommand\alglinenumber[1]{{\scriptsize#1}}   
\algrenewcommand\algorithmicrequire{\textbf{Input:}} 
\algrenewcommand\algorithmicensure{\textbf{Output:}} 
\newcommand{\ass}{\leftarrow}

\title[Power Decoding RS Codes]{Power Decoding Reed--Solomon Codes Up to the Johnson Radius}
\author[Johan Rosenkilde]{}
 
\subjclass{Primary: TODO TODO; Secondary: TODO}
 \keywords{TODO TODO TODO}

\email{jsrn@jsrn.dk}

\begin{document}
\maketitle

\centerline{\scshape Johan Rosenkilde}
\medskip
{\footnotesize
 \centerline{Technical University of Denmark, }
   \centerline{Department of Applied Mathematics and Computer Science}
   \centerline{Denmark}
} 

\bigskip

 \centerline{(Communicated by the associate editor name)}

\begin{abstract}
Power decoding, or ``decoding using virtual interleaving'' is a technique for decoding Reed--Solomon codes up to the Sudan radius.
Since the method's inception, it has been an open question if it is possible to use this approach to decode up to the Johnson radius -- the decoding radius of the Guruswami--Sudan algorithm.
In this paper we show that this can be done by incorporating a notion of multiplicities.
As the original Power decoding, the proposed algorithm is a one-pass algorithm: decoding follows immediately from solving a shift-register type equation, which we show can be done in quasi-linear time.
It is a ``partial bounded-distance decoding algorithm'' since it will fail to return a codeword for a few error patterns within its decoding radius; we investigate its failure behaviour theoretically as well as give simulation results.
\ifExtended
This is an extended version where we also show how the method can be made practically faster using a reencoding or a syndrome formulation.
\fi
\end{abstract}

\section{Introduction}

Power decoding was originally proposed by Schmidt, Sidorenko and Bossert for low-rate Reed--Solomon codes (RS)  \cite{schmidt_decoding_2006}.
Using shift-register synthesis techniques, the method can decode as many errors as the Sudan algorithm \cite{sudan_decoding_1997}.
As opposed to Sudan's list decoder, Power decoding is a one-pass algorithm where decoding is realised by solving a simultaneous shift-register problem; however, Power decoding always returns at most one codeword and will for a few error patterns simply fail.
Simulations indicate that this occurs very rarely for random errors\footnote{%
  This behaviour was described as ``probabilistic decoding'' in e.g.~\cite{schmidt_decoding_2006,schmidt_syndrome_2010}.
  However, that term is usually reserved for randomised algorithms such as Las Vegas probabilistic algorithms.
  Power decoding is entirely deterministic, given the input so we prefer the term ``partial decoding''.
}

The Sudan decoder generalises to the Guruswami--Sudan decoder \cite{guruswami_improved_1999} by introducing the multiplicity parameter, improving the decoding radius for all rates up to the Johnson radius \cite{johnson_new_1962}.
Since \cite{schmidt_decoding_2006}, it has been an open question whether it is likewise possible to introduce a ``multiplicity parameter'' into Power decoding and thereby increase the decoding radius up to the Johnson radius.

We settle this question in the affirmative.
The overall behaviour of the obtained decoder is similar to Power decoding:
\begin{enumerate}
  \item The equations are of a generalised shift-register type, and no root-finding as in Guruswami--Sudan is necessary.
  \item The decoding radius becomes almost exactly that of the Guruswami--Sudan decoder (under the same choices of parameters).
  \item When a codeword is returned, it is always a closest codeword.
        The method will fail for a few error patterns whenever one decodes beyond half the minimum distance.
\end{enumerate}
Furthermore, we will show how to realise the decoder efficiently using existing algorithms for solving simultaneous Hermite \Pade approximations: using the algorithms of \cite{gupta_triangular_2012,giorgi_complexity_2003} the complexity becomes $\Oapp(\ell^\MatExp s n)$, where $\MatExp$ is the exponent of matrix multiplication, and $s,\ell$ are the multiplicity, respectively powering parameters of the decoder, and $\Oapp(\cdot)$ is big-$O$ but ignoring $\log(ns\ell)$ factors.
Note that we always have $\ell \geq s$.
The slightly better complexity $\Oapp(\ell^{\MatExp-1} s^2 n)$ can be achieved by relying on \cite{rosenkilde_ne_nielsen_algorithms_2016} (not yet published).
The latter matches the best known complexity for the Guruswami--Sudan algorithm or the Wu list decoder \cite{chowdhury_faster_2015}.

We also investigate the failure behaviour of the proposed method.
Though we do not settle the question of precisely how often Power decoding fails, we make some headway: we prove that the behaviour depends only on the error, and not the sent codeword, and we show that failure can only occur beyond half the minimum distance.
We then give a closed upper bound on the probability for one choice of the algorithm's parameters, $(s,\ell) = (2,3)$.
We also present simulation results that demonstrate a very small probability of failure for larger parameter choices.

Compared to the two existing list decoders, the Guruswami--Sudan and the Wu algorithm, we believe the proposed algorithm is interesting for several reasons.
Firstly, it is more practical since the decoder needs only a single sub-algorithm---of shift-register type---while the other two decoders are more involved.
The Guruswami--Sudan algorithm consists of two steps: interpolation and root-finding.
Interpolation is comparable to the computation in Power decoding (see next section), but root-finding is an additional, non-trivial step.
E.g.~in \cite{ahmed_vlsi_2011}, a hardware implementation of K\"oetter and Vardy's soft-decision variant of Guruswami--Sudan has the critical path in the root-finding unit, and this also uses a significant area of the entire circuit.
In practical applications of hard-decision decoding, one would likely use smaller multiplicities than in \cite{ahmed_vlsi_2011}: this would leave the interpolation unit significantly simpler than that of \cite{ahmed_vlsi_2011} while the root-finding unit would be unchanged, resulting in root-finding occupying relatively even more area and latency.

Secondly, Power decoding and the Guruswami--Sudan algorithm can both be adapted to other algebraic constructions, but not always with the same pros and cons; see e.g.~\cite{wachter-zeh_decoding_2012,mohamed_deterministic_2015,nielsen_sub-quadratic_2015,li_decoding_2014} for adaptions of Power decoding.
Case in point, the proposed extension of Power decoding of this paper has already been adapted to improved, quasi-linear time decoding of Interleaved RS codes \cite{puchinger_decoding_2017}.
The previous only known algorithm with the same decoding radius was \cite{cohn_approximate_2012} which can be seen as an adapted Guruswami--Sudan, and which has poor complexity exactly because of the root-finding step.

Lastly, Power decoding decodes beyond half-the-minimum distance without the use of interpolation at all.
This is a rarity in algebraic decoding, and the present paper demonstrates that the Johnson radius can be reached purely linear-algebraically; at least in the presence of random errors.
This sheds new light on decoding of RS codes, and represents an important puzzle piece in relation to the two list-decoding algorithms.

Parts of these results were presented at ACCT-14 \cite{nielsen_power_2014-1}.

\subsection{Related Work}

Power decoding was introduced in \cite{schmidt_decoding_2006, schmidt_syndrome_2010}: for low-rate RS codes, it was shown how one can compute generalised syndromes from ``powering'' the received word, and that these can be used for efficient decoding by solving a multi-sequence shift-register synthesis problem.
One chooses a ``powering degree'' $\ell$: higher $\ell$ yields better decoding radius, but is admissible only for lower-rate codes.
In \cite{schmidt_syndrome_2010}, a bound on the failure probability was given for RS codes over binary extension fields when $\ell=2$, but a general conjecture was given based on simulations results.
The failure behaviour was then further examined in \cite{zeh_unambiguous_2012} and \cite{nielsen_power_2014}, where bounds on the failure probability were obtained over any field for $\ell=2$ and $\ell=3$.
In \cite{nielsen_power_2014}, a reformulation of Power decoding was given based on Gao's decoder \cite{gao_new_2003}, and this was used to show that whether or not Power decoding fails depends only on the error pattern, and not the sent codeword.

The Guruswami--Sudan algorithm \cite{guruswami_improved_1999} is a polynomial-time list-decoding algorithm up to the Johnson radius  $J_{n,k} = n - \sqrt{n(k-1)}$ \cite{johnson_new_1962}.
``List-decoding'' means that the algorithm will return \emph{all} codewords within the decoding radius.
For the algorithm one chooses two parameters $s, \ell \in \ZZ_+$, usually dubbed ``the multiplicity'' respectively ``the list size''.
They satisfy $s \leq \ell$, and they need to grow large for attaining the best decoding radius: for a decoding radius of $J_{n,k} - \varepsilon n$, one needs $s,\ell \in O(1/\varepsilon)$ for any $\varepsilon \in \RR_+$.
See \cite[p. 58]{nielsen_list_2013} for an extreme numerical example with $\varepsilon \approx 1/n^2$.

As noted already in \cite{schmidt_syndrome_2010}, Power decoding is related to Guruswami--Sudan when $s=1$ (also known as ``Sudan decoding'' after \cite{sudan_decoding_1997}): choosing the same value for $\ell$ yields (almost) exactly the same decoding radius.
Computationally, there are more similarities, as noted below.

Guruswami--Sudan consists of two phases, usually dubbed ``Interpolation'' and ``Root-finding'': first, one finds an ``interpolation polynomial'' $Q(y) \in \F[x][y]$, and then one finds $\F[x]$-roots of it.
Both phases have received a tremendous amount of attention with the aim of speeding them up, e.g.~\cite{roth_efficient_2000,alekhnovich_linear_2005,lee_list_2008,beelen_key_2010,zeh_interpolation_2011,cohn_ideal_2010,chowdhury_faster_2015}; see \cite{chowdhury_faster_2015} for an overview on the literature for the Interpolation step.
The best currently known complexities are $\Oapp(\ell^{\omega-1}s^2 n)$ for Interpolation \cite{chowdhury_faster_2015}, and $\Oapp(\ell sn)$ for Root finding \cite{neiger_fast_2017}, if $|\F| \in O(n)$.
Without the use of fast arithmetic, the best known complexities are $O(\ell^3 s^2 n^2)$ for Interpolation \cite{zeh_interpolation_2011}, respectively $O(\ell^2 s^2 n^2)$ for Root finding \cite{roth_efficient_2000}.

One approach for fast Interpolation in Guruswami--Sudan has been to formulate ``Interpolation key equations'', as in \cite{roth_efficient_2000} for the case $s=1$, and \cite{zeh_interpolation_2011} for the general case.
These are shift-register-type equations whose solution result in an interpolation polynomial.
These are related to Power decoding: the generalised syndromes in \cite{roth_efficient_2000} equal those of the original Power decoding \cite{schmidt_decoding_2006}.
However, the two sets of key equations are inherently \emph{different}: the solution to the Power decoding equations yields the error locator, while no clear notion of an error locator is known for the Guruswami--Sudan algorithm.
Similarly, the key equations that we derive in \cref{sec:new_key} bears a resemblance to the equations of \cite{zeh_interpolation_2011}, and it is an interesting question what the algebraic relation between the two approaches is.

The Wu decoding algorithm \cite{wu_new_2008} is an amalgamation between classical key equation decoding \cite{berlekamp_algebraic_1968} and the Guruswami--Sudan: one first attempts half-the-minimum distance decoding using the classical key equation (see the following section).
If this fails, the polynomials computed in the failed attempt are then used to set up a problem solvable by an $\F(x)$-variant of the Guruswami--Sudan algorithm.
One again needs Interpolation and Root-finding sub-algorithms which are similar to, but slightly more involved than, for Guruswami--Sudan; see e.g.~\cite{trifonov_efficient_2012,beelen_rational_2013,chowdhury_faster_2015} for work on these.
The best complexities for these steps equal those of the Guruswami--Sudan algorithm \cite{chowdhury_faster_2015,beelen_rational_2013}.
However, from a practical perspective, the Wu algorithm is slightly more complicated to implement.
The Wu algorithm is also a list-decoding algorithm, and also decodes up to $J_{n,k}$.
Also here one needs to choose parameters $s, \ell$, whose growth relate to the decoding radius as in the Guruswami--Sudan  algorithm \cite{beelen_rational_2013}.

\subsection{Organisation}

In \cref{sec:preliminaries} we give an introduction to the previous key equation-based decoding algorithms: half-the-minimum distance and Power decoding.
In \cref{sec:new_key}, we then derive the new key equations: non-linear relations between known polynomials, revealing the error.
We derive a decoding radius in \cref{sec:radius}, and relate it directly to that of the Guruswami--Sudan algorithm.
Power decoding will fail on certain error patterns within this radius, however, and we investigate this in \cref{sec:failure}.
In \cref{sec:simulation} we give simulation results.
In \cref{sec:pade} we show how to efficiently solve the key equations.
\ifExtended
 In \cref{sec:reencoding} and \cref{sec:syndrome} we investigate re-encoding respectively syndrome reformulations of the proposed key equations, providing practical -- if not asymptotic -- speedups to the decoder.
\else
 In \cref{sec:reencoding} we investigate re-encoding of the proposed key equations, providing practical -- if not asymptotic -- speedups to the decoder.
\fi

The decoding method has been implemented in Sage v8.0 \cite{stein_sagemath_????} and can be downloaded from \url{http://jsrn.dk/code-for-articles}, together with the code for running the simulation.

\section{Preliminaries and Existing Key Equations}
\label{sec:preliminaries}

In complexity discussions, we count arithmetic operations in the field $\F$.
We will use $\omega$ as the exponent for matrix multiplication, i.e.~$2 \leq \omega \leq 3$.
We use $\Oapp(\cdot)$ as big-$O$ but ignoring $\log$-factors.
In a few places we also use $\PolyMult n$ to denote the complexity of multiplying together two polynomials of degree at most $n$; we can trivially use $\PolyMult n \in O(n^2)$ or we can have $\PolyMult n \in \Oapp(n)$, see e.g.~\cite{von_zur_gathen_modern_2012}.

\subsection{GRS codes}

Consider some finite field $\F$.
Choose $n \leq |\F|$ as well as distinct $\alpha_1,\ldots,\alpha_n \in \F$ as well as non-zero (not necessarily distinct) $\beta_1,\ldots,\beta_n \in \F$.
For any $f \in \F[x]$ we write
\[
  \ev(f) = \big( \beta_1f(\alpha_1), \ldots, \beta_nf(\alpha_n) \big) \ .
\]
The $[n,k,d]$ Generalised Reed-Solomon (GRS) code for these parameters is the set
\[
\Code = \big\{ \ev(f) \mid f \in \F[x],~ \deg f < k \big\} \subseteq \F^n \ .
\]
The $\alpha_i$ are called \emph{evaluation points} and the $\beta_i$ \emph{column multipliers}.
$\Code$ has minimum distance $d = n-k+1$ which is the maximal possible according to the Singleton bound.

Consider now that some $\vec c = (c_1,\ldots, c_n)$ was sent with $\vec c = \ev(f)$ for some $f \in \F[x]$, and that $\vec r = (r_1,\ldots, r_n) = \vec c + \vec e$ was the received word with error $\vec e = (e_1,\ldots, e_n)$.
Let $\Errs = \{ i \mid e_i \neq 0 \}$ and $\errs = |\Errs|$.

Note that column multipliers can be ignored in decoding: we simply compute $\vec r' = (r_1/\beta_1,\ldots,r_n/\beta_n) = \vec c' + \vec e'$, where $\vec c'$ is in the code $\Code'$ which has the same evaluation points $\alpha_i$ but where all $\beta_i = 1$.
$\vec e'$ is an error vector with the same number of errors as $\vec e$.
In the remainder of the article, we therefore assume $\beta_i = 1$.

Introduce two essential polynomials, immediately computable by the receiver:
\begin{IEEEeqnarray*}{rCl+rCl}
  \label{eqn:GandR}
  G &=& \prod_{i=1}^n(x-\alpha_i) & R: \deg R < n,~ R(\alpha_i) &=& r_i,\ \ i=1,\ldots,n \ .
\end{IEEEeqnarray*}
$G$ can be pre-computed, while $R$ is computed upon receiving $\vec r$ using Lagrange interpolation.

Key equation decoders revolve around the notion of an error locator $\Lambda$ and error evaluator $\Omega$:
\begin{IEEEeqnarray*}{rCl+rCl}
  \Lambda &=& \prod_{j\in\Errs}(x-\alpha_j)
  & \Omega &=& -\sum_{i \in \Errs} e_i \zeta_i \prod_{j \in \Errs \setminus \{i\}} (x-\alpha_j) \ .
\end{IEEEeqnarray*}
where $\zeta_i = \prod_{j \neq i} (\alpha_i-\alpha_j)^\mo$.
Note that $\errs = \deg \Lambda > \deg \Omega$.

The following simple relation is at the heart of our investigations:
\begin{lemma}
  \label{lem:key_relation}
  $\Lambda(f-R) = \Omega G$ .
\end{lemma}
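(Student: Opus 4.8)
The plan is to verify the identity $\Lambda(f-R) = \Omega G$ by evaluating both sides at each of the $n$ evaluation points $\alpha_1,\dots,\alpha_n$ and comparing degrees, since two polynomials of degree less than $n$ that agree at $n$ distinct points must be equal. First I would check the degree bound: $\deg \Lambda = \errs$, $\deg R < n$, $\deg f < k \le n$, so $\deg\big(\Lambda(f-R)\big) < n + \errs$; similarly $\deg \Omega < \errs$ and $\deg G = n$, so $\deg(\Omega G) < n + \errs$. Thus it suffices to show the two sides agree at $n+\errs$ points, or — more cleanly — to show that $G \mid \Lambda(f-R)$ and that the quotient equals $\Omega$, which will follow once we evaluate at the $\alpha_i$.

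Next I would split the index set $\{1,\dots,n\}$ into the error positions $\Errs$ and the non-error positions. For $i \notin \Errs$ we have $r_i = c_i = f(\alpha_i)$, so $R(\alpha_i) = f(\alpha_i)$ and hence $(f-R)(\alpha_i) = 0$; therefore the left-hand side vanishes at $\alpha_i$. For $i \in \Errs$ we have $\Lambda(\alpha_i) = 0$ by definition of $\Lambda$, so again the left-hand side vanishes at $\alpha_i$. Hence $G = \prod_i (x-\alpha_i)$ divides $\Lambda(f-R)$, and the quotient $\tilde\Omega \defeq \Lambda(f-R)/G$ is a polynomial with $\deg \tilde\Omega < \errs$.

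It then remains to identify $\tilde\Omega$ with $\Omega$. Since both have degree less than $\errs = \deg\Lambda$, it is enough to check that they agree at the $\errs$ points $\{\alpha_j : j \in \Errs\}$. Fix $j \in \Errs$. Evaluating $\Lambda(f-R) = \tilde\Omega\, G$ at $\alpha_j$ is useless directly (both sides vanish), so instead I would differentiate, or equivalently write $G = (x-\alpha_j) \prod_{i\neq j}(x-\alpha_i)$ and $\Lambda = (x-\alpha_j)\prod_{\ell\in\Errs\setminus\{j\}}(x-\alpha_\ell)$, cancel the common factor $(x-\alpha_j)$, and then evaluate at $\alpha_j$. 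This gives $\prod_{\ell\in\Errs\setminus\{j\}}(\alpha_j-\alpha_\ell)\cdot(f-R)(\alpha_j) = \tilde\Omega(\alpha_j)\prod_{i\neq j}(\alpha_j-\alpha_i)$. Now $(f-R)(\alpha_j) = c_j - r_j = -e_j$, and $\prod_{i\neq j}(\alpha_j-\alpha_i) = \zeta_j^{\mo}$, while on the $\Omega$ side the defining sum collapses at $\alpha_j$ to the single term $-e_j\zeta_j\prod_{\ell\in\Errs\setminus\{j\}}(\alpha_j-\alpha_\ell)$. Solving for $\tilde\Omega(\alpha_j)$ yields exactly $\Omega(\alpha_j)$, completing the proof.

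The only mildly delicate point is the last step: one must be careful to cancel the factor $(x-\alpha_j)$ at the polynomial level before evaluating, rather than evaluating first; apart from that bookkeeping, the argument is a routine Lagrange-interpolation / divisibility computation with no real obstacle.
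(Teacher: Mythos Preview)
Your proof is correct, but it takes a different route from the paper's. The paper proceeds by writing down the explicit Lagrange interpolation formula for $f-R$: since $(f-R)(\alpha_i)=-e_i$ for all $i$, one has $f-R=\sum_{i=1}^n -e_i\zeta_i\prod_{j\neq i}(x-\alpha_j)$; multiplying by $\Lambda$, the terms with $i\notin\Errs$ vanish because $e_i=0$, and for $i\in\Errs$ the factor $(x-\alpha_i)$ inside $\Lambda$ combines with $\prod_{j\neq i}(x-\alpha_j)$ to produce $G$, leaving exactly $\Omega G$. This is a two-line direct computation. Your argument instead establishes the identity in two stages: first, $G\mid\Lambda(f-R)$ by pointwise evaluation at the $\alpha_i$, and second, the quotient coincides with $\Omega$ by cancelling $(x-\alpha_j)$ and evaluating at the $\errs$ error positions. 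Both arguments are elementary and rest on the same Lagrange-type bookkeeping; the paper's version is shorter because it manipulates the global interpolation formula at once, whereas your evaluation-and-degree approach trades brevity for making the divisibility and the identification of the quotient explicit as separate steps.
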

\begin{proof}
  The closed formula for Lagrange interpolation implies that $f - R = \sum_{i=1}^n -e_i \zeta_i \prod_{j \neq i}(x-\alpha_j)$.
  This directly means
  \[
    \Lambda(f-R) = \Lambda \sum_{i \in \Errs} -e_i \zeta_i \prod_{j \neq i}(x-\alpha_j)
                 = \sum_{i \in \Errs} -e_i \zeta_i \left(\frac \Lambda {x-\alpha_i}\right) G
                 = \Omega G \ .
  \]
\end{proof}

The objects $\vec c, \vec r, \vec e, \Lambda,$ etc. introduced here will be used in the remainder of the article.

\subsection{Classical Key Equations}
\label{ssec:classical_ke}

Let us revisit the key equation implicit in Gao's decoder \cite{gao_new_2003}, which follows directly from \cref{lem:key_relation}:
\begin{equation}
  \label{eqn:key_eq_gao}
  \Lambda R \equiv \Lambda f \mod G \ .
\end{equation}
This is a non-linear equation in the unknowns $\Lambda$ and $f$, and it is not immediately obvious how to build an efficient decoder around it.
The good - and classical - idea is to \emph{linearise} the relation: we replace the sought quantities $\Lambda$ and $\Lambda f$ with unknowns $\lambda$ and $\psi$, both in $\F[x]$, and such that
\[
\lambda R \equiv \psi \mod G \ .
\]
This is now a linear relation with infinitely many solutions.
We further restrict the solutions by requiring
\[
  \deg \lambda + k-1 \geq \deg \psi \ .
\]
Note that this is satisfied if $\lambda$ is replaced by $\Lambda$ and $\psi$ by $\Lambda f$.
Finally, we seek such $\lambda, \psi$ where $\lambda$ is monic and has minimal degree.
The hope is now that $\lambda = \Lambda$ even though we solved for a much weaker relation than \eqref{eqn:key_eq_gao}; effectively, it is therefore the low degree of $(\Lambda R \mod G)$ which is used to solve for $\Lambda$.
Solving such requirements for $\lambda$ and $\psi$ is sometimes known as rational function reconstruction \cite{von_zur_gathen_modern_2012} or \Pade approximation \cite{baker_pade_1996}.
They are easy to solve for in complexity $O(n^2)$ or $\Oapp(n)$, using e.g.~the extended Euclidean algorithm \cite{sugiyama_method_1975,fitzpatrick_key_1995,gao_new_2003}.

It can be shown that whenever $\errs < d/2$ we get $\lambda = \Lambda$ and $\psi = \Lambda f$, see e.g.~\cite{gao_new_2003}.
Then $f = \psi/\lambda$ and decoding is finished.
However, whenever $\errs \geq d/2$, the approach will not work, i.e.~the found $\lambda$ will not equal $\Lambda$.

Whenever $0$ is not an evaluation point, i.e.~$\alpha_i \neq 0$ for all $i$, then the equation can be rewritten to the more classical \emph{syndrome key equation} \cite{berlekamp_algebraic_1968}.
\ifExtended
First some notation: for $p \in \F[x]$, let $\rev[d]p$ denote the \emph{reversal of the coefficients} of $p$ at degree $d$, i.e.~$\rev[d]{p} = x^{d}p(x^\mo)$ for some integer $d \geq \deg p$.
To lighten the notation, we will often omit the $d$-argument when there is an implied upper bound on the degree of the polynomial being reversed; to be precise, note that we then reverse on the \emph{upper bound} on the degree, and not on the actual degree which might happen to be lower.

Introduce $S(x)$ as the power series expansion\footnote{%
  By inserting the explicit Lagrange interpolation formula for $R$, it is easy to see that this definition of the syndrome polynomial corresponds to the classical one, in e.g. \cite[Section 6.2]{roth_introduction_2006}.
} of $\rev R/\rev G$ truncated at $x^{n-k}$.
Then by reversing \cref{lem:key_relation} at degree $\errs+n-1$ we get:
\begin{IEEEeqnarray*}{rCl?r}
  \Lambda R           & =      & \Lambda f - \Omega G                                         & \iff       \\
  \rev[\errs+n-1]{\Lambda R} & =      & \rev[\errs+k-1]{\Lambda f} x^{n-k} - \mbox{$\rev[\errs+n-1]{\Omega G}$}    & \implies \\
  \rev \Lambda \rev R  & \equiv & -\rev \Omega \rev G \mod x^{n-k} \ .
\end{IEEEeqnarray*}
Since $x \nmid \rev G$ this implies the well-known formula:
\begin{equation}
  \label{eqn:key_eq_syndrome}
  \rev \Lambda S  \equiv -\rev \Omega  \mod x^{n-k} \ .
\end{equation}
A (now less obvious) algebraic relation exist between $\rev \Lambda$ and $\rev \Omega$.
To allow for efficient solving, we forget this relation, and replace $\rev \Lambda$ and $-\rev \Omega$ by unknowns $\hat \lambda$ and $\hat \omega$, and solve for the minimal degree $\hat \lambda$ satisfying
\begin{IEEEeqnarray*}{rCl?l}
  \hat \lambda S &\equiv& \hat \omega \mod x^{n-k}  & \textrm{ and} \\
  \deg \hat \lambda &>& \deg \hat \omega \ .
\end{IEEEeqnarray*}
This time the modulus is a power of $x$; solving such an equation for $\hat \lambda$ and $\hat \omega$ is known as \Pade approximations \cite{baker_pade_1996} or a linear feedback shift-register synthesis \cite[Section 6.7]{roth_introduction_2006}.
It can be solved in complexity $O(n^2)$ or $\Oapp(n)$ using either the extended Euclidean algorithm or the Berlekamp--Massey algorithm.

One can again show that this approach will succeed, i.e.~in the end $\hat \lambda = \rev \Lambda$, whenever $\errs \leq \floor{(d-1)/2}$ \cite{berlekamp_algebraic_1968}.
Slightly stronger, one can show that the approach will succeed if and only if the Gao key equation approach succeeds \cite{nielsen_power_2014}.
\else
This has the form
\[
  \hat \lambda S \equiv \hat \omega \mod x^{n-k} \ ,
\]
where $S(x)$ is the \emph{syndrome polynomial} that the receiver computes; we omit the technical rewriting.
The method is successful when $\hat \lambda = x^{\errs}\Lambda(x^\mo)$, i.e.~the coefficient-wise reversal of $\Lambda$, in which case $\Lambda$ reveals the positions of the errors, and one can use e.g.~Forney's formula or erasure decoding to finish decoding.
\fi

\subsection{Simply Powered Key Equations}
\label{ssec:power_ke}

(Simple) Power decoding, or decoding by virtual interleaving \cite{schmidt_syndrome_2010}, is a generalisation of \eqref{eqn:key_eq_gao} where not one but multiple non-linear relations between $\Lambda$ and $f$ are identified, essentially still based on \cref{lem:key_relation}.
The original formulation of \cite{schmidt_syndrome_2010} is based on the classical syndrome key equation, while powering the Gao key equation was described in \cite{nielsen_power_2014}.
We will begin with the latter:
\begin{lemma}[Simply Powered key equations]
  \label{lem:power_simple}
  For any $t \in \ZZ_+$ then
  \[
    \Lambda R^t \equiv \Lambda f^t \mod G \ .
  \]
\end{lemma}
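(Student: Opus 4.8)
The plan is to start from the fundamental relation of \cref{lem:key_relation}, namely $\Lambda(f-R) = \Omega G$, and bootstrap it to higher powers by an elementary algebraic manipulation. The key observation is that $f - R$ is divisible by $G$ ``up to the factor $\Lambda$'': more precisely, $\Lambda(f-R) \equiv 0 \pmod G$, i.e.\ $\Lambda R \equiv \Lambda f \pmod G$, which is exactly \eqref{eqn:key_eq_gao}. So the heart of the matter is to amplify the single congruence $\Lambda R \equiv \Lambda f \pmod G$ into $\Lambda R^t \equiv \Lambda f^t \pmod G$.

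The cleanest route I would take is a telescoping/factoring argument. Write
\[
  f^t - R^t = (f - R)\sum_{i=0}^{t-1} f^{i} R^{t-1-i} \ .
\]
Multiplying by $\Lambda$ and invoking \cref{lem:key_relation} on the factor $\Lambda(f-R) = \Omega G$ gives
\[
  \Lambda(f^t - R^t) = \Omega G \sum_{i=0}^{t-1} f^{i} R^{t-1-i} \ ,
\]
which is visibly a multiple of $G$. Hence $\Lambda R^t \equiv \Lambda f^t \pmod G$, which is the claim. An alternative, essentially equivalent, phrasing is an induction on $t$: the base case $t=1$ is \eqref{eqn:key_eq_gao}; for the inductive step, multiply the case-$t$ congruence by $R$ to get $\Lambda R^{t+1} \equiv \Lambda f^t R \pmod G$, then use $\Lambda R \equiv \Lambda f \pmod G$ multiplied by $f^t$ to get $\Lambda f^t R \equiv \Lambda f^{t+1} \pmod G$, and chain the two. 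I would probably present the telescoping version since it is a one-liner and makes the role of $\Omega$ transparent, but either is fine.

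There is really no substantive obstacle here; the statement is a direct corollary of \cref{lem:key_relation}, and the only thing to be slightly careful about is that all the manipulations take place in the polynomial ring $\F[x]$ (so that ``divisible by $G$'' and ``$\equiv 0 \pmod G$'' are unambiguous), and that $t$ ranging over $\ZZ_+$ includes $t = 1$ as the base case. No degree bookkeeping is needed for this particular lemma — unlike the later use of these equations, where one must track $\deg \Lambda R^t$ versus $\deg \Lambda f^t$ — so the proof is genuinely just the factorisation identity above followed by an appeal to \cref{lem:key_relation}.
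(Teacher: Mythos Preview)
Your proof is correct and essentially the same as the paper's: both show $\Lambda(f^t - R^t)$ is a multiple of $G$ by pulling out a factor of $(f-R)$ and invoking \cref{lem:key_relation}. The paper phrases it via the binomial expansion $\Lambda f^t = \Lambda\big(R - (R-f)\big)^t = \Lambda R^t + \Lambda(R-f)(\ldots)$, while you use the explicit factorisation $f^t - R^t = (f-R)\sum f^i R^{t-1-i}$; the difference is purely cosmetic.
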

\begin{proof}
  By \cref{lem:key_relation} we have
  \[
    \Lambda f^t = \Lambda \big(R - (R-f)\big)^t = \Lambda R^t + \Lambda (R-f) (\ldots) \equiv \Lambda R^t \mod G \ .
  \]
\end{proof}
Again this gives non-linear relations between $\Lambda$ and $f$.
To solve them efficiently, we choose some $\ell$ and linearise the first $\ell$ of the equations, introducing unknowns $\lambda,\psi_1,\ldots,\psi_\ell \in \F[x]$.
We then solve for $\lambda, \psi_t$ such that $\lambda$ is monic and of minimal degree such that
\begin{IEEEeqnarray*}{rCl?l?l}
  \lambda R^t  & \equiv & \psi_t \mod G \ , & t=1,\ldots, \ell & \textrm{ and } \\
  \deg \lambda & \geq   & \deg \psi_t - t(k-1) \ .
\end{IEEEeqnarray*}
Finally, we hope that the found $\lambda = \Lambda$.
In that case $f = \psi_1/\lambda$ and decoding is finished.

By regarding the linearised problem as a linear system of equations over $\F$, and counting available coefficients versus constraints, one arrives at an expression for the greatest number of errors we should expect to be decodable:
\begin{equation}
  \label{eqn:key_power_gao_radius}
  \errs \leq \tfrac \ell {\ell+1} n - \half \ell(k-1) - \tfrac \ell {\ell+1} \ .
\end{equation}
This argument does not imply that we will necessarily succeed when the bound is satisfied: the constructed system might have spurious ``false solutions'' of degree less than or equal to that of $\Lambda$.
In such rare cases decoding might fail for fewer errors than \eqref{eqn:key_power_gao_radius}.
Bounding the probability that this occurs has proven difficult: we now know upper bounds when $\ell = 2, 3$ \cite{schmidt_syndrome_2010,nielsen_power_2014}, and Schmidt, Sidorenko, and Bossert posed a conjecture, backed by simulation, on the probability in general \cite{schmidt_syndrome_2010}.

From \eqref{eqn:key_power_gao_radius} we can determine the value of $\ell$ that maximise the decoding radius.
Whenever $k/n > 1/3$, one should simply choose $\ell = 1$, i.e.~classical key equation decoding.
Thus simple Power decoding is only useful for low-rate codes.
Note that \eqref{eqn:key_power_gao_radius} is almost the same bound as the Sudan decoding algorithm \cite{sudan_decoding_1997}, which is the Guruswami--Sudan algorithm with multiplicity 1.

\ifExtended
Power decoding was originally described using a syndrome formulation instead of \eqref{eqn:key_power_gao_radius} \cite{schmidt_decoding_2006}:
we restrict ourselves to the case where $0$ is not an evaluation point, and we define $S\T t$ as the power series expansion of $\rev{R\T t}/\rev G$ truncated at $x^{n-t(k-1)-1}$, where $R\T t$ is the unique polynomial of degree less than $n$ such that $R\T t \equiv R^t \mod G$.
Then it follows from \cref{lem:power_simple}, by the same rewriting as in \cref{ssec:classical_ke} \cite{nielsen_power_2014}, that:
\begin{equation}
  \label{eqn:key_power_syndrome}
  \rev \Lambda S\T t \equiv -\rev{\Omega_t} \mod x^{n-t(k-1)-1} ,
\end{equation}
where $\Omega_t$ are certain polynomials of degree at most $\errs-1$ that we omit defining explicitly.
It can be shown using the same rewriting that Power syndrome decoding fails if and only if Power Gao decoding fails \cite{nielsen_power_2014}.
\else
Power decoding was originally given by extending the syndrome formulation mentioned in the previous section.
One now computes $\ell$ ``syndrome polynomials'' $S_1,\ldots,S_\ell$, and seek a solution $\hat \lambda, \hat \omega_1,\ldots,\hat \omega_\ell$ such that
\[
  \hat \lambda S_t \equiv \hat\omega_t \mod x^{n - t(k-1) - 1} \quad t=1,\ldots,\ell \ .
\]
This approach succeeds if and only if the Gao key equation approach succeeds \cite{nielsen_power_2014}.
\fi

For the Gao formulation, the computational problem is sometimes known as ``vector rational function reconstruction'' \cite{olesh_vector_2006}, and for the syndrome formulation as ``simultaneous \Pade approximation'' \cite{baker_pade_1996} or ``multi-sequence shift-register synthesis'' \cite{schmidt_syndrome_2010}.
Iterative algorithms with $O(\ell n^2)$ complexity can be found in \cite{beckermann_uniform_1994,schmidt_syndrome_2010,sidorenko_linear_2011,nielsen_generalised_2013}.
$\Oapp(\ell^\MatExp n)$ algorithms are in \cite{bostan_solving_2008,sidorenko_fast_2014,nielsen_generalised_2013}.
Recently, the improved complexity $\Oapp(\ell^{\MatExp-1} n)$ has been achieved \cite{rosenkilde_ne_nielsen_algorithms_2016}.

\section{New Key Equations}
\label{sec:new_key}

In this section we describe the main result of the paper: a new generalisation of Power decoding where we introduce a second parameter, \emph{the multiplicity}.
The resulting relations will again be non-linear in $\Lambda$ and $f$, and we will employ a linearisation strategy similar to before.

The generalised key equations are described in the following theorem:
\begin{theorem}
  \label{thm:power_key}
  For any $s, \ell \in \ZZ_+$ with $\ell \geq s$, then
  \begin{IEEEeqnarray*}{rCl+l}
    \Lambda^s f^t & =      & \sum_{i=0}^t \big(\Lambda^{s-i}\Omega^i\big)\left(\binom t i R^{t-i} G^i\right)
                  & \textrm{for } t=1, \ldots, s-1 \ , \\
    \Lambda^s f^t & \equiv & \sum_{i=0}^{s-1} \big(\Lambda^{s-i}\Omega^i\big)\left(\binom t i R^{t-i} G^i\right)  \mod G^s
                  & \textrm{for } t=s, \ldots, \ell \ .
  \end{IEEEeqnarray*}
\end{theorem}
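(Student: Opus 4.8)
The plan is to reduce everything to the single identity of \cref{lem:key_relation}, which I will use in the form $\Lambda f = \Lambda R + \Omega G$, or equivalently $f = R + \Omega G/\Lambda$ in the fraction field $\F(x)$. Expanding the $t$-th power by the binomial theorem and multiplying through by $\Lambda^s$ gives, as an identity in $\F(x)$,
\[
  \Lambda^s f^t \;=\; \sum_{i=0}^{t} \binom{t}{i}\, \Lambda^{s-i}\,\Omega^i\, R^{t-i}\, G^i \ .
\]
Both families of equations in the theorem are then obtained by inspecting when this formal expression is already a polynomial identity and when, after truncation, it only holds modulo $G^s$.

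For $t = 1, \ldots, s-1$ every summand has $\Lambda$-exponent $s-i \ge s-t \ge 1$, so the right-hand side is a genuine element of $\F[x]$; since $\Lambda^s f^t$ lies in $\F[x]$ as well, the displayed equation is a polynomial identity, which is exactly the first family.

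For $t = s, \ldots, \ell$ I split the sum at $i = s$. The head $\sum_{i=0}^{s-1}\binom{t}{i}\Lambda^{s-i}\Omega^i R^{t-i}G^i$ is the claimed right-hand side and, by the same exponent count, lies in $\F[x]$. It remains to check that the tail $\sum_{i=s}^{t}\binom{t}{i}\Lambda^{s-i}\Omega^i R^{t-i}G^i$ lies in $G^s\F[x]$, and here the one non-formal ingredient enters: since $\Lambda = \prod_{j\in\Errs}(x-\alpha_j)$ divides $G = \prod_{i=1}^{n}(x-\alpha_i)$, the quotient $G/\Lambda$ is a polynomial, and for every $i \ge s$
\[
  \Lambda^{s-i}\, G^i \;=\; (G/\Lambda)^{\,i-s}\, G^{s} \;\in\; G^s\,\F[x] \ ,
\]
the $i = s$ term being simply $\binom{t}{s}\Omega^s R^{t-s}G^s$. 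Hence each tail term is $G^s$ times a polynomial, so $\Lambda^s f^t$ and the head differ by an element of $G^s\F[x]$; this is the second family of congruences.

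There is no serious obstacle beyond care about the passage through $\F(x)$, where the formal negative powers of $\Lambda$ make sense, and about invoking $\Lambda \mid G$ at the right moment so that those negative powers are absorbed into $G^s$. One can equally stay inside $\F[x]$ by starting from the honest identity $\Lambda^t f^t = (\Lambda R + \Omega G)^t$ and multiplying or dividing by appropriate powers of $\Lambda$, using $\Lambda \mid G$ (hence $\Lambda^{t-s}\mid G^t$) to justify the divisions; this is the same computation presented differently, and I would choose whichever the referee finds more transparent.
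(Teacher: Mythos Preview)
Your proof is correct and follows essentially the same route as the paper: binomial-expand $f^t=(R+(f-R))^t$, multiply by $\Lambda^s$, and invoke \cref{lem:key_relation} to turn each $\Lambda(f-R)$ into $\Omega G$. The only cosmetic difference is how the tail $i\ge s$ is shown to lie in $G^s\F[x]$: the paper stays in $\F[x]$ and factors $\Lambda^s(f-R)^i=(\Omega G)^s(f-R)^{i-s}$, while you pass through $\F(x)$ and use $\Lambda\mid G$ to rewrite $\Lambda^{s-i}G^i=(G/\Lambda)^{i-s}G^s$; both are valid and equally short.
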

\begin{proof}
  We simply rewrite
  \begin{IEEEeqnarray*}{rCl}
    \Lambda^s f^t
    &=& \Lambda^s (R + (f - R))^t \\
    &=& \sum_{i=0}^t \binom t i \Lambda^s (f-R)^i R^{t-i} \ .
  \end{IEEEeqnarray*}
  If $t < s$ then $\Lambda^s (f-R)^i = \Lambda^{s-i} \Omega^i G^i$ for each $i$ by \cref{lem:key_relation}.
  This finishes the first part of the theorem.

  If $t \geq s$ then for $i=s,\ldots,\ell$, the summand equals $\binom t i \Lambda^{i-s}\Omega^sG^sR^{t-i}$ due to \cref{lem:key_relation}, which is $0$ modulo $G^s$.
  Replacing $\Lambda^s (f-R)^i$ by $\Lambda^{s-i} \Omega^i G^i$ for $i < s$ as before gives the sought.
\end{proof}

The above theorem describes $\ell$ equations in the (algebraically related) ``unknowns'' $\Lambda^s, \Lambda^{s-1}\Omega, \ldots, \Lambda \Omega^{s-1}$ as well as $\Lambda^s f, \ldots, \Lambda^s f^\ell$.
These are ``key equations'' in the following sense: inner products of the unknowns $\Lambda^{s-i}\Omega^i$ with vectors of known polynomials (the $\tbinom t i R^{t-i} G^i$) equal the unknowns $\Lambda^s f^t$ modulo $G^s$ -- and hence have surprisingly low degree.

The relations of \cref{thm:power_key} are highly non-linear and solving for $\Lambda$ and $f$ directly would be computationally infeasible.
Instead we \emph{linearise} the relations: derive weaker, linear relations from \cref{thm:power_key} which can be solved efficiently:
\begin{problem}
  \label{prob:ke_linearised}
  Find a vector $(\lambda_1, \ldots \lambda_s, \psi_1, \ldots, \psi_\ell) \in \F[x]^{s+\ell}$ with $\lambda_1$ monic and such that the following requirements are satisfied:
  \begin{IEEEeqnarray*}{r?l?l}
    1a) & \psi_t         =      \sum_{i=0}^{t} \lambda_{i+1} \cdot \left(\binom t i R^{t-i} G^i\right)  \ ,           & \textrm{for } t=1, \ldots, s-1  \\
    1b) & \psi_t         \equiv \sum_{i=0}^{s-1} \lambda_{i+1} \cdot \left(\binom t i R^{t-i} G^i\right) \mod G^s \ , & \textrm{for } t=s, \ldots, \ell \\
    2)  & \deg \lambda_1 \geq   \deg \lambda_{i+1} + i \ ,                                                      & \textrm{for } i=1,\ldots,s-1    \\
    3)  & \deg \lambda_1 \geq   \deg \psi_t - t(k-1) \ ,                                                        & \textrm{for } t=1,\ldots,\ell   \ .
  \end{IEEEeqnarray*}
\end{problem}
Clearly $\vec \Lambda = (\Lambda^s, \Lambda^{s-1}\Omega, \ldots, \Lambda \Omega^{s-1}, \Lambda^s f, \ldots, \Lambda^s f^\ell)$ satisfies the requirements.
The strategy is to find a \emph{minimal solution}, by which we mean that $\deg \lambda_1$ is minimal, and then hope that this solution is actually $\vec \Lambda$.
If that turns out to be the case, decoding can be completed simply by computing $f = \psi_1/\lambda_1$.
\emph{Whether} we can expect that to be the case is addressed in Sections \ref{sec:radius} and \ref{sec:failure}.

The complete decoding algorithm is given as \cref{alg:decoder}, where we assume a solver for \cref{prob:ke_linearised}.
Note that \cref{prob:ke_linearised} could be solved as a series of linear systems in the coefficients of the $\lambda_i$, one system for each guess at $\deg \lambda_1$.
A much more efficient algorithm for solving \cref{prob:ke_linearised} is addressed in \cref{sec:pade}, where we obtain the complexity $\Oapp(s\ell^\omega n)$ for \cref{alg:decoder} (or $\Oapp(s^2\ell^{\omega-1} n)$ relying on the unpublished \cite{rosenkilde_algorithms_2018}).

\begin{algorithm}[t]
  \caption{Efficient Power Decoding with Multiplicities}
  \label{alg:decoder}
  \begin{algorithmic}[1]
    \Require{$\vec r \in \F^n$, $s,\ell \in \ZZ_+$.}
    \Ensure{$\tilde{\vec c} \in \Code$ such that $\dist(\tilde{\vec c}, \vec r)$ is minimal among codewords in $\Code$, or $\fail$\;}
    \State $R \ass $ the Lagrange interpolation polynomial such that $R(\alpha_i) = r_i, i = 1,\ldots,n$.
    \State Compute $\binom t i R^{t-i} G^i \modop G^s$ for $i=1,\ldots,t$ and $t = 1,\ldots,\ell$.
    \State $(\lambda_1,\ldots,\lambda_{s}, \psi_1,\ldots, \psi_\ell) \ass $ a solution to \cref{prob:ke_linearised} such that $\deg \lambda_1$ is minimal.\label{line:decoder:solve_keyeq}
    \State If $\lambda_1$ divides $\psi_1$, let $f \ass \psi_1/\lambda_1$. Otherwise $\fail$.
      \label{line:decoder:compute_f}
    \State If $\dist(\vec r, \ev(f)) = \deg \lambda_1/s$ then return $\ev(f)$. Otherwise $\fail$.
      \label{line:decoder:verify_dist}
  \end{algorithmic}
\end{algorithm}

\begin{remark}
  The shape of the equations of \cref{thm:power_key} bears a striking resemblance to certain approaches for solving the Interpolation phase in the Guruswami--Sudan algorithm: the $\F[x]$ module characterisation as in \cite{lee_list_2009,beelen_key_2010}, and the (intermediate) Interpolation key equations as in \cite[Eqn. (31)]{zeh_interpolation_2011}.
  However, the Guruswami--Sudan algorithm has, a priori, nothing to do with the error locator, and the true connection between the two sets of key equations is unclear.
  For instance, it is not known if one can easily obtain the error locator from a Guruswami--Sudan interpolation polynomial or vice versa.
\end{remark}

\begin{remark}
  The original Power decoding can be described by analogy with decoding of certain Interleaved RS codes \cite{schmidt_syndrome_2010}.
  It would be interesting to find a similar analogue for the key equations of \cref{thm:power_key}.
\end{remark}

\section{Decoding Radius}
\label{sec:radius}

\newcommand\red[1]{\check{#1}}

We will now discuss how many errors \cref{alg:decoder} will usually be able to correct.
When calling this a ``decoding radius'' we need to be wary: indeed, the method \emph{will} fail for certain received words whenever the number of errors is at least $d/2$, and this is unavoidable since it is a unique decoding algorithm.
Therefore, ``decoding radius'' really involves two parts: 1) how many errors should we at most expect to be able to correct; and 2) what is the probability that we will fail when the number of errors is at most this.
In this section we will answer the first of these questions, and turn to the latter in \cref{sec:failure}.

The decoding radius upper bound that we will derive is based on linear algebra: when the number of errors $\errs$ is large enough, then solutions to \cref{prob:ke_linearised} that are smaller than the sought $\vec \Lambda$ will appear.

\begin{proposition}
  \label{prop:degree_upper_bound}
  Consider a received word $\vec r$ and the corresponding instance of \cref{prob:ke_linearised}.
  There is a vector $\vec v = (\red \lambda_1,\ldots,\red \lambda_s, \red\psi_1, \ldots \red\psi_\ell)$ satisfying Items 1a and 1b of \cref{prob:ke_linearised} as well as:
  \begin{IEEEeqnarray*}{r?l?l}
    2')  & s\tau_\Pow(s, \ell) \geq   \deg \lambda_{i+1} + i \ ,                                                      & \textrm{for } i=0,\ldots,s-1    \\
    3')  & s\tau_\Pow(s, \ell) \geq   \deg \psi_t - t(k-1) \ ,                                                        & \textrm{for } t=1,\ldots,\ell   \ .
  \end{IEEEeqnarray*}
  where
  \begin{equation}
    \label{eqn:degree_upper_bound}
    \tau_\Pow(s,\ell) = \frac {2\ell - s + 1}{2(\ell+1)}n - \frac \ell {2s} (k-1) - \frac \ell {s(\ell+1)}  \ .
  \end{equation}
  If $\errs > \tau_\Pow(s, \ell)$ then $\deg \Lambda > \deg \red\lambda_1/s$.
\end{proposition}
\begin{proof}
  Satisfying Items 1a, 1b of \cref{prob:ke_linearised} as well as Items 2' and Items 3' above is a homogeneous linear set of restrictions in the coefficients of the $\lambda_i$: the linear combinations on the right-hand side of Items 1a and 1b should have bounded degree, either directly or reduced modulo $G^s$.
  If there are more coefficients than constraints, there will be a solution.

  Let us write $\tau = \tau_\Pow(s,\ell)$ for brevity; we will derive that if $\tau$ satisfies \eqref{eqn:degree_upper_bound}, then there will be a solution to the homogeneous system.
  For every $t = 1,\ldots,s-1$, Item 1a imposes $C_t$ constraints, where:
  \begin{IEEEeqnarray*}{rCl}
    C_t &=& \deg(\textrm{rhs}) - (s\tau + t(k-1)) \\
        &=& \max_{i=0,\ldots,s-1}( s\tau - i + (n-1)(t-i) + in) - (s\tau + t(k-1)) \\
        &=& tn - t - t(k-1) = t(n-1-(k-1)) \ .
  \end{IEEEeqnarray*}
  For Item 1b, then $\psi_t$ has bounded degree modulo $G^s$, so this gives for $t = s,\ldots,\ell$:
  \begin{IEEEeqnarray*}{rCl}
    C_t &=& sn-1 - (s\tau + t(k-1))
  \end{IEEEeqnarray*}
  We thus have a total number of constraints:
  \begin{IEEEeqnarray*}{rCl}
    \sum_{t=1}^\ell C_t
    &=& \sum_{t=1}^{s-1}\big(t(n-1-(k-1))\big) + \sum_{t=s}^\ell\big(sn - s\tau - t(k-1) - 1 \big) \\
    &=& \half (2\ell - s + 1)sn - \tbinom{\ell+1} 2 (k-1) - (\ell-s+1)s \tau - (\tbinom s 2 + \ell-s+1)
  \end{IEEEeqnarray*}
  The total number of coefficients in $\lambda_1,\ldots,\lambda_s$ is:
  \begin{IEEEeqnarray*}{rCl}
    K = \sum_{i=0}^{s-1}(s \tau - i + 1)
    &=& s^2 \tau - \tbinom s 2 + s
  \end{IEEEeqnarray*}
  The condition for a guaranteed solution is then $K > \sum_{t=1,\ldots,\ell} C_t$, i.e.:
  \begin{IEEEeqnarray*}{rCl}
    (\ell+1)s\tau > \half (2\ell - s + 1)sn - \tbinom{\ell+1} 2 (k-1) - \ell-1 \ .
  \end{IEEEeqnarray*}
  Thus, there must be a solution satisfying Items 1a, 1b, 2' and 3' for $\tau$ satisfying:
  \begin{IEEEeqnarray*}{rCl}
    (\ell+1)s\tau = \half (2\ell - s + 1)sn - \tbinom{\ell+1} 2 (k-1) - \ell \ .
  \end{IEEEeqnarray*}
\end{proof}

The solution $\red\lambda_1,\ldots,\red\lambda_s$ guaranteed by \cref{prop:degree_upper_bound} will not necessarily solve \cref{prob:ke_linearised}: it might e.g.~be that $\deg \red\lambda_1 < \deg \red\lambda_2 + 1 \leq s \tau_\Pow(s,\ell)$.
However, it is natural to suspect that, once there are solutions to the system of \cref{prop:degree_upper_bound}, there will be solutions with $\deg \red \lambda_1 = s \tau_\Pow(s, \ell)$, and such solutions will necessarily also solve \cref{prob:ke_linearised}.
The minimal solution to \cref{prob:ke_linearised} will in such cases not be $\vec\Lambda$ that we are looking for.
Therefore, we might \emph{expect} to fail, whenever $\errs > \tau_\Pow(s, \ell)$.
This intuition is completely backed by simulation, see \cref{sec:simulation}: with high probability, decoding seems to fail if $\errs > \tau_\Pow(s, \ell)$, but for a few error patterns it does succeed after all.
We will therefore regard $\tau_\Pow(s, \ell)$ as the decoding radius of \cref{alg:decoder}.

The expression $\tau_\Pow(s,\ell)$ turns out to related to something very well known:
\begin{corollary}
  \label{cor:power_vs_gs}
  Denote the maximal decoding radius of the Guruswami--Sudan algorithm on $\Code$ with multiplicity $s$ and list size $\ell$ by $\tau_\GS(s,\ell)$.
  Then
  \[
    \tau_\GS(s,\ell) = \frac {2\ell -s + 1}{2(\ell+1)} n - \frac {\ell}{2s}(k-1) = \tau_\Pow(s,\ell) + \frac{\ell}{s(\ell+1)} \ .
  \]
  (see e.g.~\cite[Lemma 9.5]{roth_introduction_2006}).
\end{corollary}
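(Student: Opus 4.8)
The plan is to prove the corollary by deriving the Guruswami--Sudan decoding radius $\tau_\GS(s,\ell)$ from first principles (or, more precisely, by recalling its standard derivation) and then observing that it differs from $\tau_\Pow(s,\ell)$ in \eqref{eqn:degree_upper_bound} by exactly the additive term $\tfrac{\ell}{s(\ell+1)}$. This is really a bookkeeping computation: both quantities are closed-form rational expressions in $n$, $k$, $s$, $\ell$, so once the Guruswami--Sudan radius is written down in the appropriate normalisation the identity is immediate.

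First I would recall the Guruswami--Sudan setup: with multiplicity $s$ and list size $\ell$, one seeks a nonzero $Q(x,y) = \sum_{j=0}^{\ell} Q_j(x) y^j \in \F[x][y]$ with a zero of multiplicity $s$ at each of the $n$ points $(\alpha_i, r_i)$, subject to the weighted-degree constraint $\wdeg_{1,k-1} Q < s(n - \tau)$ for a decoding radius $\tau$ — this last inequality being exactly what guarantees that any codeword polynomial $f$ within distance $\tau$ satisfies $Q(x,f(x)) = 0$. The number of linear constraints imposed by the multiplicity conditions is $n\binom{s+1}{2}$, and the number of free coefficients available in a polynomial of $y$-degree at most $\ell$ and $(1,k-1)$-weighted degree below $s(n-\tau)$ is, up to lower-order terms, $(\ell+1)s(n-\tau) - \binom{\ell+1}{2}(k-1)$. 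Requiring the latter to strictly exceed the former and solving for $\tau$ yields
\[
  \tau \;<\; \frac{2\ell - s + 1}{2(\ell+1)}\, n \;-\; \frac{\ell}{2s}(k-1),
\]
which is precisely the claimed $\tau_\GS(s,\ell)$; the reference to \cite[Lemma 9.5]{roth_introduction_2006} supplies the careful version of this counting argument including the handling of the ceiling/strict-inequality edge cases, so I would simply cite it for the exact statement rather than reproduce the details.

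With $\tau_\GS(s,\ell)$ in hand, the remainder is a one-line subtraction: comparing with \eqref{eqn:degree_upper_bound},
\[
  \tau_\GS(s,\ell) - \tau_\Pow(s,\ell)
  = \Bigl(\tfrac{2\ell-s+1}{2(\ell+1)}n - \tfrac{\ell}{2s}(k-1)\Bigr)
  - \Bigl(\tfrac{2\ell-s+1}{2(\ell+1)}n - \tfrac{\ell}{2s}(k-1) - \tfrac{\ell}{s(\ell+1)}\Bigr)
  = \frac{\ell}{s(\ell+1)},
\]
since the $n$-term and the $(k-1)$-term coincide exactly. This establishes both displayed equalities in the corollary.

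The only genuine subtlety — and hence the step I would be most careful about — is normalisation consistency. The Guruswami--Sudan radius is often quoted in forms that differ by whether one uses $k$ versus $k-1$, whether the weighted degree uses weight $k-1$ or $k$, and whether the list-size parameter counts $y$-degree $\le \ell$ or $< \ell$; these conventions shift the formula by $O(1)$ or $O(1/\ell)$ terms and could masquerade as a discrepancy with $\tau_\Pow$. The paper's own conventions (degree $< k$ codewords, the key equation degree bound $\deg\lambda_0 \ge \deg\psi_t - t(k-1)$, and list size $\ell$ meaning $\psi_1,\dots,\psi_\ell$, i.e.\ $y$-powers up to $\ell$) match exactly the normalisation under which \cite[Lemma 9.5]{roth_introduction_2006} gives the stated $\tau_\GS$, so I would make this alignment explicit before invoking the reference. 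Once that is pinned down, the corollary follows purely from the algebraic identity above, with no further work required.
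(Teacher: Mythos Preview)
Your proposal is correct and follows the paper's approach: the paper gives no proof for this corollary at all, simply citing \cite[Lemma 9.5]{roth_introduction_2006} for the value of $\tau_\GS(s,\ell)$ and leaving the identity $\tau_\GS(s,\ell) = \tau_\Pow(s,\ell) + \tfrac{\ell}{s(\ell+1)}$ as an immediate subtraction from \eqref{eqn:degree_upper_bound}. Your sketch is therefore more detailed than the original, but in the same spirit; the care you take over normalisation conventions is appropriate but not something the paper addresses explicitly.
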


Taken over all $s$ and $\ell$, the decoding radius of Guruswami--Sudan describes a curve $J(n,d) = n - \sqrt{n(n-d)}$, often called the Johnson radius after \cite{johnson_new_1962}.
For any integer $\tau < J(n,d)$ there exists infinitely many choices of $s, \ell$ such that $\tau = \floor{\tau_\GS(s,\ell)}$.
Thus, by \cref{cor:power_vs_gs}, Power decoding is similarly bounded by the Johnson radius (for $s, \ell \rightarrow \infty$ then $\tau_\GS(s,\ell) - \tau_\Pow(s,\ell) \rightarrow 0$).
The corollary even allows us to use closed-form expressions for small $s$ and $\ell$ already analysed for the Guruswami--Sudan algorithm:

\begin{proposition}
  \label{prop:parameter_choices}
  Given a decoding radius $\tau < J(n,d) = n - \sqrt{n(n-d)}$, let $\tilde\tau := \tau + \tfrac 1 {s(\tau)}$.
  As long as $\tilde\tau < J(n,d)$ then $\tau_\Pow\big(s(\tilde\tau),\ell(\tilde\tau)\big) \geq \tau$, where
  \begin{IEEEeqnarray*}{rCl+rCl}
    s(\tau)    &=& \lfloor s_{\min}(\tau) + 1 \rfloor
      &
    \ell(\tau) &=& \left \lfloor \tfrac{n-\tau}{k-1} \cdot s(\tau) + \half - \tfrac {\sqrt{D(\tau)}}{k-1} \right \rfloor \ ,
  \end{IEEEeqnarray*}
  \begin{IEEEeqnarray*}{rCl}
    s_{\min}(\tau) & = & \frac {\tau (k-1)}{(n-\tau)^2-n(k-1)} \\
    D(\tau)       & = & \big(s(\tau) - s_{\min}(\tau)\big)\cdot \big((n-\tau)^2 - n(k-1)\big) \cdot s(\tau) + \tfrac {(k-1)^2} 4
  \end{IEEEeqnarray*}
\end{proposition}
\begin{proof}
  Since $\tilde\tau < J(n,d)$, it is a valid decoding radius for the Guruswami--Sudan algorithm, and so by \protect{\cite[p. 53]{nielsen_list_2013}}, then $\tau_\GS\big(s(\tilde\tau),\ell(\tilde\tau)\big) \geq \tilde\tau$.
  Therefore \cref{cor:power_vs_gs} gives us $\tau_\Pow(s,\ell) \geq \tau + \frac 1 {s(\tau)} - \frac {\ell(\tilde\tau)} {s(\tilde\tau)(\ell(\tilde\tau)+1)}$, so we are done if $s(\tau) \leq s(\tilde\tau)$.
  But $s_{\min}(\tau)$ is monotonically increasing for $0 < \tau < J(n,d)$ so $s(\tau)$ is non-decreasing,
\end{proof}

\begin{remark}
  We remark that the condition of \cref{prop:parameter_choices} that $\tilde\tau < J(n,d)$ seems almost always to be verified: for $n < 100$ an exhaustive search found only 50 choices of the triple $(n,k,\tau)$ for which it was not verified, and in 48 of these cases $n = k+3$.
  As an example of the tightness of the closed expressions, consider the large parameters $[n,k] = [243320, 131155]$: here the list size of \cref{prop:parameter_choices} never exceeds the minimal possible by more than 1 for all possibly decoding radii.
\end{remark}

\section{Failure Behaviour}
\label{sec:failure}

\def\hc{\hat{\vec c}}
\def\hR{\hat R}
\def\hf{\hat f}
\def\hpsi{\hat \psi}
\def\lambdar{\check \lambda}
\def\psir{{\check \psi}}

We will move on to investigate how Power decoding fails when at most $\tau_\Pow(s,\ell)$ errors occur.
There are two ways in which \cref{alg:decoder} can give an unwanted answer: firstly, the algorithm can return $\fail$; or secondly, the algorithm can return a different codeword than the sent one.
For a specific sent codeword $\vec c$ and received word $\vec r$, we say that Power decoding \emph{fails} if one of the two following conditions are satisfied:
\begin{enumerate}
  \item \cref{alg:decoder} returns $\fail$.
  \item There exists $\vec c' \in \Code$, $\vec c' \neq \vec c$, and such that $\dist(\vec r, \vec c') \leq \dist(\vec r, \vec c)$.
\end{enumerate}
Recall that when \cref{alg:decoder} does not return $\fail$, it always returns a codeword of minimal distance to the received.
So if neither of the above conditions are satisfied, \cref{alg:decoder} returns the correct answer.
Contrarily, if only item 2 above is satisfied and $\dist(\vec r, \vec c') = \dist(\vec r, \vec c)$, then $\vec c$ might still be correctly returned.
However, it is much more likely that the found solution to the key equation in \cref{line:decoder:solve_keyeq} will be some mix of the solutions corresponding to the two errors $\vec r - \vec c$  and $\vec r - \vec c'$, in which case decoding will fail.
For the sake of a cleaner definition, we therefore consider this possibility as a failure as well.

We will begin with showing that the error vector alone determines whether the method succeeds.
This drastically simplifies further examinations on the failure behaviour.
It allows us first to show the---quite expected---property that the method never fails when fewer than $d/2$ errors occur.
Secondly, it allows us to give a closed upper bound on the failure probability when $(s,\ell) = (2,3)$.
Lastly, we discuss the relation between Power decoding failing and having multiple close codewords to the received word.

\begin{proposition}
  \label{prop:power_gao_inv}
  The success of Power decoding $\vec r = \vec c + \vec e$ depends only on the error $\vec e$.
\end{proposition}
\begin{proof}
  It suffices to show that if Power decoding fails for $\vec r$ as received word, then Power decoding also fails for $\vec r + \hc$ where $\hc$ is any codeword.
  If decoding fails on input $\vec r$ this is because there exist $\lambda_1,\ldots, \lambda_s, \psi_1,\ldots,\psi_\ell \in \F[x]$ which solve \cref{prob:ke_linearised}, and where $\lambda_1 \neq \Lambda^s$ and $\deg \lambda_1 \leq \deg \Lambda^s$.
  Assume this is the case.
  Let $\hR$ be the Lagrange interpolant corresponding to $\vec r + \hc$ as received word, i.e.~$\hR = R + \hf$ where $\hf = \ev^\mo(\hc)$ and $\deg \hf < k$.
  We will show that there exist $\hpsi_1,\ldots,\hpsi_\ell \in \F[x]$ such that the $\lambda_i, \hpsi_t$ form a solution to \cref{prob:ke_linearised} for $\hR$ in place of $R$.
  Therefore, Power decoding will also fail for $\vec r + \hc$ as received word.

  Consider for $t=1,\ldots,\ell$ the following expansion:
  \begin{IEEEeqnarray*}{l}
    \sum_{i=0}^{\min(t, s-1)} \lambda_{i+1} \cdot \left(\binom t i \hR^{t-i} G^i\right) \\
    = \sum_{i=0}^{\min(t, s-1)} \lambda_{i+1} \binom t i \left( \sum_{h=0}^{t-i} \binom{t-i}{h} R^{t-i-h} \hf^h \right) G^i \\
    = \sum_{h=0}^{t} \hf^h \sum_{i=0}^{\min(t-h,s-1)} \lambda_{i+1} \binom t i \binom{t-i}{h} R^{t-i-h} G^i \ . \\
  \end{IEEEeqnarray*}
  Note now that $\binom t i \binom {t-i} h = \binom t h \binom {t-h} i$.
  Therefore, the above equals
  \begin{IEEEeqnarray*}{ll}
    \sum_{h=0}^{t} \binom t h \hf^h \sum_{i=0}^{\min(t-h,s-1)} \lambda_{i+1} \binom{t-h}{i} R^{t-i-h} G^i \\
    \equiv \sum_{h=0}^{t} \binom t h \hf^h \psi_{t-h} \ ,
  \end{IEEEeqnarray*}
  where we by ``$\equiv$'' mean $=$ when $t < s$ and congruent modulo $G^s$ when $t \geq s$.
  We set $\hpsi_t$ as the last expression above.
  By hypothesis, $\deg \psi_{t-h} - (t-h)(k-1) < \deg \lambda_1$.
  Since $\deg \hf < k$ we therefore get $\hpsi_t - t(k-1) < \deg \lambda$.

  This means the $\lambda_i,\psi_t$ indeed form a solution to \cref{prob:ke_linearised} for $\hR$, as we set out to prove.

  The proved implication can immediately be applied in the other direction since $-\vec\hc$ is a codeword, showing the bi-implication.
\end{proof}

We now prove that Power decoding always succeeds in half-the-minimum distance decoding.
The proof is surprisingly technical since we need to keep a handle on all the key equations simultaneously.

\begin{proposition}
  \label{prop:mindist}
  If fewer than $d/2$ errors occur, then Power decoding succeeds.
\end{proposition}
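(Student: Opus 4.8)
The plan is to use \cref{prop:power_gao_inv} to reduce to the case where the transmitted codeword is $\vec 0$, so that $f=0$ and $\vec r=\vec e$. Then $R$ interpolates the error, and writing $G=\Lambda G_0$ (so $\gcd(\Lambda,G_0)=1$), \cref{lem:key_relation} reads $\Lambda R=-\Omega G$, i.e.\ $R=-\Omega G_0$; moreover $\gcd(\Lambda,\Omega)=1$ and $\deg\Lambda=\errs<d/2$. The ``true'' solution of \cref{prob:ke_linearised} is $\vec\Lambda=(\Lambda^s,\Lambda^{s-1}\Omega,\dots,\Lambda\Omega^{s-1},0,\dots,0)$, so the solution $(\lambda_0,\dots,\lambda_{s-1},\psi_1,\dots,\psi_\ell)$ returned by \cref{alg:decoder} has $\deg\lambda_0\le s\errs$, hence also $\deg\lambda_i\le s\errs-i$ and $\deg\psi_t\le s\errs+t(k-1)$ by requirements~2 and~3. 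It suffices to prove $\lambda_i=\Lambda^{s-i}\Omega^i$ for $i=0,\dots,s-1$: then $\lambda_0=\Lambda^s$ has degree exactly $s\errs$, requirement~1a for $t=1$ gives $\psi_1=\Lambda^sR+\Lambda^{s-1}\Omega G=\Lambda^{s-1}(\Lambda R+\Omega G)=0$, so \cref{alg:decoder} recovers $f=\psi_1/\lambda_0=0$ and returns $\ev(0)=\vec 0$ with the distance check $\deg\lambda_0/s=\errs$ satisfied; and since $\errs<d/2$, any codeword $\vec c'$ with $\dist(\vec e,\vec c')\le\errs$ would give $d\le\dist(\vec 0,\vec c')\le 2\errs<d$, so no competing codeword exists and the second failure mode is excluded too.

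The engine for the key claim is the identity $\Lambda^jR^j=(-1)^j\Omega^jG^j$. Multiplying the $t$-th equation of \cref{prob:ke_linearised} through by $\Lambda^t$ and substituting makes the right-hand side collapse to $G^tA_t$ when $t\le s-1$ and to $G^s(A_s+\Lambda^sq_s)$ when $t=s$, where $q_s\in\F[x]$ and $A_t=\sum_{i=0}^{\min(t,s-1)}\binom t i(-1)^{t-i}\Omega^{t-i}\Lambda^i\lambda_i$. For $t=s$ this closes immediately: $\Lambda^s\psi_s/G^s=A_s+\Lambda^sq_s$ is a polynomial of degree at most $2s\errs+s(k-1)-sn=s(2\errs-d)<0$, hence zero, so $\psi_s=0$ and $\Lambda^s\mid A_s$; reducing $A_s$ modulo $\Lambda$, where only the $i=0$ term $(-1)^s\Omega^s\lambda_0$ survives, gives $\Lambda\mid\lambda_0$ using $\gcd(\Lambda,\Omega)=1$. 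For $s=1$ this already finishes, since then $\lambda_0=\Lambda$ by monicity and $\deg\lambda_0\le\errs$.

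For $s\ge 2$ one bootstraps using the equations for $t=s-1,s-2,\dots,1$ (all in requirement~1a, hence exact). Substituting $R=-\Omega G_0$, $G=\Lambda G_0$ writes $\psi_t=\Lambda^{s-t}G_0^{\,t}P_t$ for a polynomial $P_t$ once $\Lambda^{s-t}\mid\lambda_0$ is known; a degree count then gives $\deg P_t\le t(2\errs-d)<0$, so $P_t=0$, i.e.\ $\psi_t=0$. The vanishing of these $P_t$, together with the $\Lambda$-adic congruences extracted from $\Lambda^s\mid A_s$ (and its counterparts at lower $t$) modulo successive powers of $\Lambda$, can be combined: subtracting the condition at index $t$ from the one at index $t+1$ cancels the contribution of $\lambda_0$ and isolates $\lambda_1$ modulo $\Lambda$, then $\lambda_2$ modulo $\Lambda$, and so on, which pushes the divisibility $\Lambda^{s-t}\mid\lambda_0$ up by one at each stage and sets up the next step. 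Carried through $t=1$ this yields $\Lambda^s\mid\lambda_0$, whence $\lambda_0=\Lambda^s$, and back-substitution forces $\lambda_i=\Lambda^{s-i}\Omega^i$ for all $i$, completing the reduction of the first paragraph.

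The main obstacle is exactly this last, inductive step: one must keep ``how large a power of $\Lambda$ divides which $\lambda_i$'' synchronised with the degree estimates so that each estimate lands strictly below zero, which requires combining the $s$ key equations simultaneously rather than one at a time, and which is where the hypothesis $\errs<d/2$ (rather than the weaker radius $\tau_\Pow(s,\ell)$ of \cref{prop:degree_upper_bound}) is used in a tight way. A secondary nuisance is small characteristic: a binomial coefficient occurring as a scalar, such as $\binom s1=s$ or $\binom{s-1}1=s-1$, may vanish in $\F$; this only makes divisibility propagate faster, so the statement is unaffected, but the argument must branch on such cases.
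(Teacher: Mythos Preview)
Your overall plan matches the paper's: reduce to the zero codeword, observe $R=-\Omega G_0$ (the paper writes $\Upsilon$ for your $G_0$), show $\psi_s=0$ and $\Lambda\mid\lambda_0$ from the $s$-th key equation, then bootstrap downwards. Your treatment of the $t=s$ case is correct and cleanly done.

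The genuine gap is in the inductive step. You write that ``$\psi_t=\Lambda^{s-t}G_0^{\,t}P_t$ for a polynomial $P_t$ once $\Lambda^{s-t}\mid\lambda_0$ is known'', but this is not sufficient: for $t<s-1$ the sum $\psi_t=G_0^{\,t}\sum_{i=0}^t\binom t i(-\Omega)^{t-i}\Lambda^i\lambda_i$ has $\Lambda^{s-t}$ dividing the $i$-th summand only if $\Lambda^{s-t-i}\mid\lambda_i$ for \emph{each} $i\le s-t$, not just $i=0$. So the induction hypothesis must track divisibility of all the $\lambda_i$ simultaneously. The paper does exactly this, positing $P(t):\ \Lambda^{t+1-i}\mid\lambda_i$ and $\psi_{s-i}=0$ for $i=0,\dots,t$. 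The step from $P(t-1)$ to $P(t)$ first uses the hypothesis on the $\lambda_i$ to get $\Lambda^t G_0^{\,s-t}\mid\psi_{s-t}$ and hence $\psi_{s-t}=0$ (your degree count $t(2\errs-d)<0$, suitably re-indexed, is the right one), and then---the part you only hand-wave---propagates divisibility to all $\lambda_i$: one multiplies the equations $\psi_{s-j}\equiv 0$ for $j=0,\dots,t$ by $R^j$, reduces modulo $G^s$, and takes successive differences so as to eliminate $\lambda_0,\lambda_1,\dots$ one at a time; at each stage the surviving leading binomial coefficient is $\binom{s-j}{0}=1$, so no characteristic obstruction arises. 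Your worry about vanishing binomial coefficients is therefore a red herring for this particular argument.

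Two minor points. First, you set as target $\lambda_i=\Lambda^{s-i}\Omega^i$ for all $i$; this is stronger than needed. The induction already yields $\psi_1=0$ and $\Lambda^s\mid\lambda_0$, whence $\lambda_0=\Lambda^s$ (monic of degree at most $s\errs$) and $f=\psi_1/\lambda_0=0$, which is all the algorithm requires. Second, your ``subtracting the condition at index $t$ from the one at index $t+1$'' is too vague to be checkable; the actual elimination is not a single subtraction but a triangular cascade across the key equations with indices $s,s-1,\dots,s-t$.
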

\begin{proof}
  By \cref{prop:power_gao_inv}, we can assume that $\vec 0$ was sent.
  By \cref{lem:key_relation} we then have $R = -\Omega \Upsilon$, where $\Upsilon = G/\Lambda$.

  Assume contrary to the proposition that Power decoding has failed.
  That means there exists $(\lambda_1,\ldots, \lambda_s, \psi_1,\ldots,\psi_\ell)$ which solve \cref{prob:ke_linearised}, and where $\lambda_1 \neq \Lambda^s$ and $\deg \lambda_1 \leq \deg \Lambda^s$.
  We will inductively establish $P(t)$ for $t=0,\ldots,s-1$, where $P(t)$ is the assertion
  \begin{equation*}
    P(t):\quad \Lambda^{t+1-i} \mid \lambda_{i+1} \textrm{ and } \psi_{s-i} = 0 \, \textrm{ for } i=0,\ldots,t \ .
  \end{equation*}
  For $t=s-1$, $P(t)$ implies $\Lambda^s \mid \lambda_1$, which contradicts the minimality of $\lambda_1$, finishing the proof.

  For the case $P(0)$, we need to prove that $\Lambda \mid \lambda_1$ and $\psi_s = 0$.
  Consider the $s$'th key equation of \cref{prob:ke_linearised} which is satisfied by the $\lambda_{i+1}$ and $\psi_s$:
  \begin{equation}
    \label{eqn:mindist:key_s}
    \psi_s \equiv \sum_{i=0}^{s-1} \binom s i \lambda_{i+1} R^{s-i} G^i \mod G^s \ .
  \end{equation}
  $\Upsilon^s$ divides each term of the summand, as well as the modulus $G^s$, and so it must divide $\psi_s$.
  However, we have
  \[
    \deg \psi_s \leq \deg \lambda_1 + s(k-1) \leq s\errs + s(k-1) < s(n-\errs) \ ,
  \]
  where the last inequality holds since $2\errs < n-k+1$.
  Thus $\psi_s = 0$.

  Returning to \eqref{eqn:mindist:key_s}, we can then conclude $\Lambda \mid \lambda_1 R^s$ , since $\Lambda$ divides every other term in the sum as well as the modulus.
  This implies $\Lambda \mid \lambda_1$ since $\gcd(\Lambda, R) = 1$.

  For the inductive step, assuming $P(t-1)$ we will prove $P(t)$ for $1 \leq t < s$.
  Consider now the $(s-t)$'th key equation, i.e.~
  \begin{IEEEeqnarray*}{rCl}
    \psi_{s-t} = \sum_{i=0}^{s-t} \binom {s-t} i \lambda_{i+1} R^{s-t-i} G^i \ .
  \end{IEEEeqnarray*}
  Similar to before, $\Upsilon^{s-t}$ divides every term of the sum, so it divides $\psi_{s-t}$.
  By $P(t-1)$ then $\Lambda^{t-i} \mid \lambda_{i+1}$ for $i = 0,\ldots, t-1$, and therefore $\Lambda^t \mid \lambda_{i+1} R^{s-t-i}G^i$.
  This implies $\Lambda^t \mid \psi_{s-t}$ and hence $\Upsilon^{s-t} \Lambda^t \mid \psi_{s-t}$.
  But now we have
  \[
    \deg \psi_{s-t} \leq \deg \lambda_1 + (s-t)(k-1) \leq s\errs + (s-t)(k-1) < (s-t)(n-\errs) + t\errs \ ,
  \]
  which means $\psi_{s-t} = 0$.

  It remains to show that $\Lambda^{t+1-i} \mid \lambda_{i+1}$ for $i=0,\ldots,t$.
  For $j = 1,\ldots, t$, multiply the $(s-j)$'th key equation with $R^j$ and relax it to a congruence modulo $G^s$.
  We obtain $t+1$ homogeneous linear equations in $\lambda_{i+1} R^{s-i}G^i$ of the form:
  \[
    0 \equiv \sum_{i=0}^{\min(s-1,s-j)} \binom {s-j} i (\lambda_{i+1} R^{s-i} G^i) \mod G^s \ , \quad j=0,\ldots,t \ .
  \]
  Subtracting the $j$th equation from the $(j-1)$st for $j=1,\ldots,t$, we eliminate $\lambda_1$ and get
  \[
    0 \equiv \sum_{i=1}^{s-1} \binom {s-j} {i-1} (\lambda_{i+1} R^{s-i} G^i) \mod G^s \ , \quad j=1,\ldots,t \ .
  \]
  This can be continued to get a series of equation systems, that is, for $t' = 1,\ldots,t$, we have a system:
  \[
    0 \equiv \sum_{i=t'}^{s-1} \binom {s-j} {i-t'} (\lambda_{i+1} R^{s-i} G^i) \mod G^s \ , \quad j=t',\ldots,t \ .
  \]
  For $t'=t$, the system (which is one equation) implies that $\Lambda^{t+1} \mid \lambda_{t+1}R^{s-t}G^t$ since $\Lambda^{t+1}$ divides all the sum's other terms and the modulus, and this implies $\Lambda \mid \lambda_{t+1}$.
  We can now go to the $t' = t-1$ system and regard any of the two equations, and we conclude similarly that $\Lambda^{t+1} \mid \lambda_{t} R^{s-t+1}G^{t-1}$ since $\Lambda^{t+1}$ now is seen to divide all other terms of the sum as well as the modulus.
  This implies $\Lambda^2 \mid \lambda_{t}$.
  Continuing with decreasing $t'$ we can iteratively conclude $\Lambda^{t+1-t'} \mid \lambda_{t'+1}$.

  This finishes the induction step, establishing $P(t)$ for $t=0,\ldots,s-1$.
  As mentioned, this implies a contradiction, finishing the proof.
\end{proof}

We are now in a position to bound the probability that Power decoding fails if errors of a given weight are drawn uniformly at random, for the case $(s,\ell)=(2,3)$.
Note that by \cref{prop:degree_upper_bound}, then
\[
  \tau_\Pow(2,3) = \frac 5 8 n - \frac 3 4 (k-1) - \frac 3 8 \ ,
\]
so these parameters allow the decoder to improve upon both half-the-minimum distance and the original Power decoding whenever the rate is between $1/6$ and $1/2$, for long enough codes.

\begin{proposition}
  \label{prop:failure_prob_23}
  Let $q = |\F|$.
  Whenever $d/2 \leq \errs < \tau_\Pow(2,3)$, the probability that Power decoding fails is upper bounded by
  \[
  \left\{
  \begin{array}{ll}
    \displaystyle
     4 \big(q^{-8}\big)^{(\tau_\Pow(2,3)-\errs) - (0.29 \errs/\log q-1/4)} & \textrm{when } \errs \geq \frac 3 5 n - \frac 4 5 (k-1)  \\[3pt]
    \displaystyle
     4 q^{-d/5 + 1 + 1.61\errs/\log q}  & \textrm{when }  \errs < \frac 3 5 n - \frac 4 5 (k-1)
  \end{array}
  \right .
  \ .
  \]
\end{proposition}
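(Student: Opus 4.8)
The plan is to reduce to the all-zero codeword via \cref{prop:power_gao_inv}, read off an explicit ``factored'' shape that any hypothetical failing solution must have, and then count the errors admitting such a solution by a Schwartz--Zippel/B\'ezout estimate. By \cref{prop:power_gao_inv} it suffices to bound the failure probability when $\vec 0$ is sent, so $\vec r = \vec e$ and, by \cref{lem:key_relation}, $R = -\Omega\Upsilon$ with $\Upsilon = G/\Lambda$ and $\gcd(\Lambda,\Upsilon) = \gcd(\Lambda,\Omega) = 1$. Working with the punctured module $\Mod{\red M}$ of \cref{ssec:punctured} (for which the bound is claimed), \cref{prop:alg_works} and \cref{cor:solve_keyeq_wpf} show that a failure occurs precisely when $\Mod{\red M}$ contains a vector $(\lambda_0,\psi_1,\psi_2,\psi_3)$ with $\lambda_0$ monic, $\lambda_0 \ne \Lambda^2$, $\deg\lambda_0 \le \deg\Lambda^2 = 2\errs$, and $\deg\psi_t \le \deg\lambda_0 + t(k-1)$ for $t = 1,2,3$. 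Substituting $R = -\Omega\Upsilon$ and $G = \Lambda\Upsilon$ into the defining relations 1a--1b of \cref{prob:ke_linearised} for $(s,\ell) = (2,3)$ and eliminating the auxiliary entry $\lambda_1$ shows $\Upsilon \mid \psi_1$ and $\Upsilon^2 \mid \psi_2,\psi_3$; writing $\psi_1 = \Upsilon\pi_1$, $\psi_2 = \Upsilon^2\pi_2$, $\psi_3 = \Upsilon^2\pi_3$ one is left with the two congruences
\[
  \pi_2 \equiv -\lambda_0\Omega^2 - 2\Omega\pi_1, \qquad
  \pi_3 \equiv 2\Upsilon\Omega^3\lambda_0 + 3\Upsilon\Omega^2\pi_1 \pmod{\Lambda^2},
\]
together with $\deg\pi_1 \le 3\errs + (k-1) - n$, $\deg\pi_2 \le 4\errs + 2(k-1) - 2n$, $\deg\pi_3 \le 4\errs + 3(k-1) - 2n$; the genuine solution is $\lambda_0 = \Lambda^2$, $\pi_1 = \pi_2 = \pi_3 = 0$.

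The heart of the proof is to convert ``a failing solution exists'' into a genuine constraint on $\vec e$. I would clear the congruences to identities $\lambda_0\Omega^2 + 2\Omega\pi_1 + \pi_2 = \Lambda^2\rho_2$ and $2\Upsilon\Omega^3\lambda_0 + 3\Upsilon\Omega^2\pi_1 + \pi_3 = \Lambda^2\rho_3$, bound $\deg\rho_2,\deg\rho_3$ from the degree bounds above, and read the whole as a homogeneous $\F$-linear system in the coefficients of $\lambda_0,\pi_1,\pi_2,\pi_3,\rho_2,\rho_3$ whose matrix $\mathcal A(\vec e)$ has entries polynomial in the error values (through the coefficients of $\Lambda,\Omega,\Upsilon$). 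The genuine solution lies in $\ker\mathcal A(\vec e)$ for every $\vec e$; a failure means $\ker\mathcal A(\vec e)$ also contains a vector whose $\lambda_0$-block is not a scalar multiple of $\Lambda^2$, i.e.\ $\mathcal A(\vec e)$ drops below its generic rank -- the simultaneous vanishing of a fixed family of maximal minors $\mathcal D_1(\vec e),\dots,\mathcal D_N(\vec e)$ with $N = O(\ell)$. The indispensable point, without which the count is vacuous, is that these $\mathcal D_j$ are not identically zero, and in fact that their restriction to each error-support coordinate subspace is a non-zero polynomial in the $\errs$ surviving error values; I would prove this by writing down one explicit error for which the module basis is computable and has the genuine vector as its unique minimal leading-position-$1$ row, so that the identity that would have to hold fails at a point and hence generically (taking a little care at small $q$ that the relevant leading monomial has bounded exponents, so it survives on the torus of non-zero error values). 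This is where the hypothesis $\errs < \tau_\Pow(2,3)$ enters, with no slack to spare.

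Given a non-vanishing $\mathcal D_{j_0}$, the failing errors of a fixed support lie in a proper subvariety of $\F^{\errs}$ that the degree bookkeeping bounds to have degree at most $4\cdot 10^{\errs}$ -- a crude bound on a product of finitely many low-degree conditions in the $\errs$ error values, and the source of the $10^{\errs}$, equivalently $0.29\,\errs/\log q$, term -- and codimension at least $c(\errs)$; Schwartz--Zippel then gives failure probability at most $4\cdot 10^{\errs}\, q^{-c(\errs)}$, and averaging over supports introduces no further factor. Propagating the degrees of $\rho_2,\rho_3,\lambda_0$ and the $\pi_t$ through the minors yields $c(\errs) = 8(\tau_\Pow(2,3) - \errs) + 2$ in the upper range $\errs \ge \tfrac{3}{5}n - \tfrac{4}{5}(k-1)$, the factor $8$ reflecting the size of the relevant structured system; rewriting gives the first displayed bound, the constant $4$ absorbing the $O(\ell)$ minors and small integer corrections. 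The case distinction at $\errs = \tfrac{3}{5}n - \tfrac{4}{5}(k-1)$ is exactly the threshold where an intermediate degree bound in the second paragraph changes sign: below it fewer of the $\pi_t$ can be non-zero, the $8$-fold family of conditions degenerates, and one is left with a single, weaker condition forcing a fixed polynomial of degree $\approx (d + 3(k-1))/5$ to vanish coefficientwise, i.e.\ $c(\errs) = (d+3(k-1))/5 - 1$ independent of $\errs$ -- the second line.

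The main obstacle is precisely the non-vanishing claim of the second paragraph: picking the right minors and showing they survive on every error-support subspace. Since we sit just below $\tau_\Pow(2,3)$ rather than below $d/2$, the degree inequalities close only by a hair, so the $\gcd$-analysis of $\lambda_0$ against $\Lambda$ and $\Upsilon$ -- in the spirit of the inductive divisibility argument proving \cref{prop:mindist}, but now with no margin -- must be carried out carefully and separately in the two regimes; everything after that is delicate but routine degree counting to land on the stated exponents.
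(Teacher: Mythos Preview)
Your reduction to the zero codeword and the division by powers of $\Upsilon$ are both on target, and match the paper's opening moves. But the heart of your argument has a genuine gap that I do not see how to close as written.

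The paper does \emph{not} proceed via a generic rank-drop/Schwartz--Zippel argument. Instead, after setting up the three key equations, it eliminates $\lambda_0$ and $\lambda_1$ by clever manipulation to obtain a \emph{single} explicit congruence
\[
  \psir_2^2\,\Upsilon \equiv \psir_1\psir_3 \pmod{\Lambda^2},
\]
together with the side relation $R\psir_1 \equiv \psir_2\Upsilon \pmod{\Lambda}$ that determines $R$ modulo $\Lambda/g$ from any triple (with $g=\gcd(\psir_1,\Lambda)$). One then bounds the number of triples $(\psir_1,\psir_2,\psir_3)$ satisfying the displayed congruence directly by an elementary count (the paper's \cref{lem:abstract_congruence_bound}), sums over divisors $g\mid\Lambda$ with weight $q^{\deg g}$, and the factors $2^{\gamma}$, $5^{\errs}=\sum_t\binom{\errs}{t}4^{\errs-t}$ and $q^{8(\errs-\tau_\Pow(2,3))-2}$ all drop out with explicit provenance; the threshold $\errs=\tfrac35 n-\tfrac45(k-1)$ is exactly where $\gamma=5\errs-(3n-4(k-1))$ changes sign in that lemma.

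Your route instead asserts that the failure locus has \emph{codimension} $c(\errs)=8(\tau_\Pow(2,3)-\errs)+2$ and then applies Schwartz--Zippel. But Schwartz--Zippel, applied to a single nonzero minor $\mathcal D_{j_0}$, only gives that the failure locus sits inside a hypersurface, i.e.\ a bound of the form $\deg(\mathcal D_{j_0})/(q-1)$, which is far weaker than $q^{-c(\errs)}$. To get the exponent $8(\tau_\Pow(2,3)-\errs)$ you would need to prove that the common zero locus of your family of minors genuinely has that codimension over every error-support subspace---and ``propagating the degrees'' through the linear system computes the \emph{degrees} of the minors, not the codimension of their intersection. Nothing in your sketch addresses this, and I do not see a soft way to get it: codimension lower bounds of this strength typically require exactly the kind of explicit algebraic structure (here, the discriminant-like identity $\psir_2^2\Upsilon\equiv\psir_1\psir_3$) that the paper extracts and that you bypass. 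Likewise your $4\cdot 10^{\errs}$ degree bound and the second-regime exponent $(d+3(k-1))/5$ are asserted rather than derived; in the paper they arise from the specific counting lemma and the sign of $\gamma$, not from any generic B\'ezout estimate.
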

\begin{proof}
  By \cref{prop:power_gao_inv}, we can consider the probability over the choice of error vector, and simply bound the failure probability when the sent codeword was $\vec 0$.
  Since we know by \cref{prop:mindist} that the failure probability is zero when $\errs < d/2$, then we can also assume $\errs \geq d/2$.

  Fix now the number of errors $\errs$ and error positions $\Errs$, implying a specific $\Lambda$.
  For a given error $\vec e = \vec r$ with these non-zero positions, we will call $\vec r$, or $R$, ``bad'' if for $R$ there exist $\lambda_i, \psi_t$ solving \cref{prob:ke_linearised} and such that $\lambda_1 \neq \Lambda^s$ while $\deg \lambda_1 \leq \deg \Lambda^s$.
  Consequently, Power decoding fails only for bad error-values.
  Denote by $S_\Lambda \subset \F[x]$ the set of bad $R$.
  We will give an upper bound $N$ on the size of $S_\Lambda$ and so $N/(q-1)^{\errs}$ bounds the probability that for the fixed error positions, Power decoding fails (since for each position, we have $q-1$ choices of an error value).
  $N$ will turn out to be independent of the choice of $\Lambda$, and thus $N/(q-1)^\errs$ is a bound on the probability that Power decoding fails for any error of weight $\errs$.

  By assumption, the following equations are satisfied:
  \begin{IEEEeqnarray*}{rCl;l;l;l}
    \psi_1 & =      & \lambda_1 R   & + & \phantom{2}\lambda_2 G \ ,               \\
    \psi_2 & \equiv & \lambda_1 R^2 & + & 2 \lambda_2 R G    & \mod G^2 \ ,  \\
    \psi_3 & \equiv & \lambda_1 R^3 & + & 3 \lambda_2 R^2 G  & \mod G^2 \ .
  \end{IEEEeqnarray*}
  Since $R(\alpha_i) = 0$ whenever $i \notin \Errs$, then $\Upsilon \mid R$ where $\Upsilon = G/\Lambda$.
  Thus the above implies $\Upsilon \mid \psi_1$ and $\Upsilon^2 \mid \psi_t$ for $t=2,3$.
  Furthermore, letting $g \defeq \gcd(\lambda_1, \Lambda)$, we can conclude that $g = \gcd(\psi_t, \Lambda)$ for all $t$.
  The regular form of the above three equations allows eliminating $\lambda_1$ and obtain:
  \begin{IEEEeqnarray*}{rCl;l}
    \psi_2 - R\psi_1 &\equiv& \lambda_2 R G    & \mod G^2 \ , \\
    \psi_3 - R\psi_2 &\equiv& \lambda_2 R^2 G  & \mod G^2 \ .
  \end{IEEEeqnarray*}
  From this we first note that $G \mid (\psi_2 - R\psi_1)$.
  We will use this fact momentarily.
  With the two above equations we continue to eliminate $\lambda_2$ and rewrite:
  \begin{IEEEeqnarray*}{rCl;l+r}
    \psi_3 - R\psi_2 - R(\psi_2 - R\psi_1)      & \equiv & 0        & \mod G^2 & \iff     \\
    R^2\psi_1 - 2R\psi_2 + \psi_3               & \equiv & 0        & \mod G^2 & \implies \\
    R^2\psi_1^2 - 2R\psi_1\psi_2 + \psi_1\psi_3 & \equiv & 0        & \mod G^2 & \iff     \\
    (R\psi_1 - \psi_2)^2 + \psi_1\psi_3         & \equiv & \psi_2^2 & \mod G^2 \ .
  \end{IEEEeqnarray*}
  But we concluded just before that $G \mid (R\psi_1 - \psi_2)$ so $(R\psi_1 - \psi_2)^2 \equiv 0 \mod G^2$.
  This leaves the simple relation
  \begin{IEEEeqnarray}{rCl+l}
    \psi_2^2          & \equiv & \psi_1 \psi_3 \mod G^2 \notag & \iff \\
    \psir_2^2\Upsilon & \equiv & \psir_1 \psir_3 \mod \Lambda^2 \ , \label{eqn:failure_prob_psir_relation}
  \end{IEEEeqnarray}
  where $\psir_t \defeq \psi_t/\Upsilon^{\min(2,t)}$, and is a polynomial by our earlier observations.
  Thus, whenever $R$ is bad, there is a triple $(\psir_1,\psir_2,\psir_3) \in \F[x]$ satisfying the above relation as well as
  \begin{equation}
    \label{eqn:failure_prob_psir_deg}
    \deg \psir_t \leq d_t \defeq 2\errs + t(k-1) - \min(2,t)(n-\errs) \ .
  \end{equation}
  We will count the number of such triples momentarily.
  However, to thusly bound the number of bad error values, we have to determine how many different $R$ could have the same triple.
  Recall that determining $R$ up to congruence modulo $\Lambda$ suffices, since this determines the error values.
  However, by our previous observation we have
  \begin{IEEEeqnarray*}{rCl+l}
    R\psi_1 &\equiv& \psi_2 \mod G & \iff \\
    R\psir_1 &\equiv& \psir_2\Upsilon \mod \Lambda & \iff \\
    R &\equiv& (\psir_2\Upsilon/g)(\psir_1/g)^\mo \mod \Lambda/g \ .
  \end{IEEEeqnarray*}
  This means that for a given triple $(\psir_t)_t$, having $\gcd(\psir_t, \Lambda) = g$, there can be at most $q^{\deg g}$ possible choices of $R$.

  To bound the number of bad error values $N$ for this given $\Lambda$, we will therefore perform a weighted count of all triples satisfying  \eqref{eqn:failure_prob_psir_relation} and \eqref{eqn:failure_prob_psir_deg}, where a triple is counted with weight $q^{\deg g}$, where $g$ is a divisor of $\Lambda$ dividing all the $\psir_t$:

  \begin{IEEEeqnarray*}{rCl}
  N
  &\leq& \sum_{g \mid \Lambda} q^{\deg g}
    \Big| \Big\{ (\psir_t)_t \in \F[x]^3 \ \Big|\
      g \mid \psir_t,\ \deg \psir_t \leq d_t,\ \Upsilon \psir_2^2 \equiv \psir_1 \psir_3 \mod \Lambda^2 \Big\} \Big| \\
  &=& \sum_{g \mid \Lambda} q^{\deg g}
    \Big| \Big\{ (\check \psir_t)_t \in \F[x]^3 \ \Big|\
      \deg \check \psir_t \leq d_t - \deg g,\ \Upsilon \check \psir_2^2 \equiv \check \psir_1 \check \psir_3 \mod (\Lambda/g)^2 \Big\} \Big|
      \ .
  \end{IEEEeqnarray*}

  Let $T_g$ be the set inside the last sum.
  We use \cref{lem:abstract_congruence_bound} (see below) to upper bound $|T_g|$, for any choice of $g$:
  setting $A = (\Lambda/g)^2$, $B = \Upsilon$ and $K_t = d_t - \deg g$ in that lemma, we get
  \begin{IEEEeqnarray*}{rCl}
    |T_g| &\leq& 2^{\gamma + 2\errs - 2\deg g} q^{4\errs - (2n-2(k-1)) + 1 + \max(0, \gamma) - \deg g} \ ,
  \end{IEEEeqnarray*}
  where $\gamma = 5\errs - (3n - 4(k-1))$.
  This is only dependent on the \emph{degree} of $g$.
  For each choice of $\deg g$, we can select $g$ in $\binom \errs {\deg g}$ ways since $g \mid \Lambda$ and $\Lambda$ splits into $\errs$ linear factors.
  This gives
  \begin{IEEEeqnarray}{rCl}
    \label{eqn:more_precise_bound}
    N &\leq& q^{4\errs + 2n + 2(k-1) + 1 + \max(0,\gamma)} 2^\gamma \sum_{t=0}^\errs \binom \errs t 4^{\errs - t}
  \end{IEEEeqnarray}
  This can be simplified at a small loss of tightness.
  Firstly $\sum_{t=0}^\errs \binom \errs t 4^{\errs - t} = (4+1)^\errs$.
  For the case $\gamma \geq 0$, we rewrite into
  \begin{IEEEeqnarray*}{rCl}
    N &\leq& q^{\errs + 8(\errs - \tau_\Pow(2,3)) - 2} 2^\gamma 5^\errs
    \qquad (\gamma \geq 0) ,
\end{IEEEeqnarray*}
  since $8\errs - (5n - 6(k-1) - 3) = 8(\errs - \tau_\Pow(2,3))$.
  Now $\gamma \leq \errs$, as can be seen as follows:
  since $\errs < \tau_\Pow(2,3)$ then $4\errs < \frac 5 2 n - 3(k-1)$.
  Inserting in the expression for $\gamma$, we get that $\gamma \leq \errs - (n/2 - (k-1))$.
  But since we assumed $d/2 \leq \tau_\Pow(2,3)$ then $k-1 \leq n/2$ which means $\gamma \leq \errs$.
  Therefore:
  \begin{IEEEeqnarray*}{rCl}
    N &\leq& q^{\errs + 8(\errs - \tau_\Pow(2,3)) - 2} 10^\errs \qquad (\gamma \geq 0) \ .
  \end{IEEEeqnarray*}
  For the case $\gamma < 0$, we instead get
  \begin{IEEEeqnarray*}{rCl}
    N &\leq& q^{4\errs - 2d + 1} 5^\errs  \qquad (\gamma <  0) \ .
  \end{IEEEeqnarray*}
  Since $\gamma < 0$ then $\errs < \tfrac 3 5 n - \tfrac 4 5 (k-1) < \tfrac 3 5 d$, and so $3\errs - 2d < -d/5$, which gives
  \begin{IEEEeqnarray*}{rCl}
    N &\leq& q^{\errs - d/5 + 1} 5^\errs  \qquad (\gamma <  0) \ .
  \end{IEEEeqnarray*}
  As previously described $N/(q-1)^\errs$ then becomes a bound on the probability of decoding failure.
  The term $(\frac q {q-1})^\errs$ then appears, but $(\frac q {q-1})^\errs \leq (\frac q {q-1})^q \leq 2^2$.
  Finally, $10^\errs = (q^{-8})^{-\errs \log 10/(8\log q)}$ and $\log 10/8 < 0.29$ and similarly for $5^\errs$.
\end{proof}

\begin{lemma}
  \label{lem:abstract_congruence_bound}
  Let $A, B \in \F[x]$ with $\gcd(A,B) = 1$, and $K_1 < K_2 < K_3 \in \ZZ_+$, as well as $q = |\F|$.
  Let $S$ denote the set of triples $(f_1,f_2,f_3) \in \F[x]^3$ such that
  $B f_2^2 \equiv f_1 f_3 \mod A$, while $\deg f_t \leq K_t$ and $f_2$ is monic.
  Then
  \[
  |S| \leq 2^{K_1+K_3}q^{K_2 + 1 + \max(0, \gamma)} \ ,
  \]
  where $\gamma = \max(K_1+K_3,\ 2K_2 + \deg B) - \deg A$.
\end{lemma}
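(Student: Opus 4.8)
\emph{Proof plan.} The plan is to count $|S|$ by slicing on the monic middle polynomial $f_2$, which turns the problem into counting factorisations $f_1 f_3 = H$ of a suitably constrained polynomial $H$; the three factors $2^{K_1+K_3}$, $q^{K_2+1}$, $q^{\max(0,\gamma)}$ are then read off from, respectively, a divisor count, the number of admissible $f_2$, and the slack in the quotient polynomial hidden in the congruence. First I would normalise: replacing $A$ by $A/\mathrm{lc}(A)$ leaves $S$ unchanged, so take $A$ monic; the case $\deg A = 0$ is vacuous; and $\gcd(A,B)=1$ makes $B$ a unit modulo $A$. Now fix $f_2$ monic with $\deg f_2 \le K_2$ --- there are $\tfrac{q^{K_2+1}-1}{q-1} < q^{K_2+1}$ of these. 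The congruence says $f_1 f_3 = B f_2^2 - A Q$ for a unique $Q \in \F[x]$; comparing degrees, $\deg(AQ) \le \max(2K_2 + \deg B,\, K_1 + K_3)$ whenever $AQ \ne 0$, so $\deg Q \le \gamma$, and if $\gamma < 0$ then $Q = 0$ is forced (the difference is divisible by $A$ yet has degree $< \deg A$). For a fixed admissible $Q$ put $H = B f_2^2 - A Q$. If $H \ne 0$ and $\deg H \le K_1 + K_3$, the ordered pairs $(f_1,f_3)$ with $f_1 f_3 = H$, $\deg f_1 \le K_1$, $\deg f_3 \le K_3$ are in bijection with pairs (a scalar $\alpha \in \F^\star$, a monic divisor $g$ of $H/\mathrm{lc}(H)$ with $\deg g \le K_1$ and $\deg H - \deg g \le K_3$); since a monic polynomial of degree $D$ has at most $2^{D}$ monic divisors, there are at most $(q-1)2^{\deg H} \le (q-1)2^{K_1+K_3}$ of them, and none at all when $\deg H > K_1 + K_3$.

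Next I would treat the strand $f_1 = 0$ (and symmetrically $f_3 = 0$) on its own: there $B f_2^2 \equiv 0 \pmod A$, hence $A \mid f_2^2$, which for $A$ monic forces $A^\flat \mid f_2$, where $A^\flat = \prod_{p \mid A} p^{\lceil v_p(A)/2 \rceil}$ (product over monic irreducible divisors) has $\deg A^\flat \ge \tfrac12 \deg A$; the number of monic $f_2$ of degree $\le K_2$ with this property is $O(q^{K_2 - \deg A/2})$, and this strand is then bounded by combining that with the $q^{K_3+1}$ choices of $f_3$. For the main strand, $f_1$ and $f_3$ both nonzero, I would sum the per-$(f_2,Q)$ bound over $f_2$ and over the admissible $Q$. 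The clean sanity check is $\gamma < 0$: there $Q = 0$, and $|S| \le \sum_{f_2}(q-1)2^{K_1+K_3} = 2^{K_1+K_3}(q^{K_2+1}-1) < 2^{K_1+K_3}q^{K_2+1}$, which is exactly the claim --- the scaling factor $q-1$ in the factorisation count cancelling the $\tfrac{q^{K_2+1}-1}{q-1}$ monic shapes.

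The hard part, and the heart of the proof, is keeping the constants sharp once $\gamma \ge 0$: naively multiplying $(q-1)2^{K_1+K_3}$ by the number of $f_2$ by the number of $Q$ overshoots by a factor of order $q$. Two refinements fix this. First, the admissible $Q$ must be made to contribute only $q^{\max(0,\gamma)}$, not $q^{\max(0,\gamma)+1}$: when $\deg(B f_2^2) > K_1 + K_3$, the requirement $\deg H \le K_1 + K_3$ pins down the top $\deg(B f_2^2) - (K_1 + K_3)$ coefficients of $Q$ (because $A$ is monic), leaving at most $q^{K_1+K_3-\deg A+1}$ free choices. Second, rather than charging every $H$ its worst-case $2^{\deg H}$ divisors, one regroups the double sum by $f_1$, using that for a fixed nonzero $f_1$ the completing $f_3$ with $\deg f_3 \le K_3$ is determined modulo $A/\gcd(f_1,A)$, hence there are at most $q^{(K_3+1-\deg A+\deg\gcd(f_1,A))^{+}}$ of them; the number of $f_1$ of each gcd-type is then controlled by the divisor structure of $A$. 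Balancing these estimates against the monic count of $f_2$ is the remaining computation --- purely combinatorial, but this is where the care is required.
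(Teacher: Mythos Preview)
Your plan follows the same decomposition as the paper: fix the monic $f_2$, bound the set of possible products $f_1 f_3 \in \{Bf_2^2 + pA : \deg p \le \gamma\}$, then bound the number of ordered factorizations of each such product by $(q-1)2^{K_1+K_3}$. The paper does precisely this, in a few lines, and nothing more---it does \emph{not} carve out a separate $f_1=0$ strand, does \emph{not} pin down top coefficients of the quotient, and does \emph{not} regroup by $\gcd(f_1,A)$. For $\gamma \ge 0$ it simply multiplies the three counts and reads off the bound.

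Your observation that this ``naive'' triple product overshoots the stated bound by a factor of $q$ is arithmetically well-founded: with the correct count $(q^{K_2+1}-1)/(q-1)$ of monic $f_2$ of degree $\le K_2$, one gets $(q^{K_2+1}-1)\cdot q^{\gamma+1}\cdot 2^{K_1+K_3}$, which is $q$ times the claim. The paper writes the $f_2$-count as $q^{K_2-1}$, which makes its arithmetic close but undercounts the monic polynomials; this appears to be a slip, harmless for the application to the failure-probability estimate. In any case, the two refinements you sketch to recover that factor of $q$ are substantially more machinery than the paper deploys, and the second one (summing over $f_1$ and controlling $f_3$ via $\gcd(f_1,A)$ and the divisor structure of $A$) is only an outline that would need real work to complete. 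The paper's intended argument is exactly the three-line multiplication you call ``naive''; your elaborations are not part of it.
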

\begin{proof}
  Consider first $\gamma < 0$ in which case $Bf_2^2 = f_1 f_3$.
  We can choose $f_2$ in $q^{K_2-1}$ ways.
  The prime divisors of $Bf_2^2$ should then be distributed among $f_1$ and $f_3$, which can be done in $2^{K_1 + K_3}$ ways.
  Finally, the leading coefficient of $f_1$ can be chosen in $q-1$ ways.

  Consider now $\gamma \geq 0$.
  We choose again first $f_2$ in one of $q^{K_2-1}$ ways.
  Then $f_1f_3$ must be in the set $\{ B f_2^2 + p A \mid p \in \F[x], \deg p \leq \gamma \}$, having cardinality at most $q^{\gamma+1}$.
  For each of these choices of $f_1 f_3$, we can again choose $f_1$ and $f_3$ in at most $(q-1)2^{K_1 + K_3}$ ways.
\end{proof}

The bound of \cref{prop:failure_prob_23} demonstrates a rapid, exponential decrease in the probability of failure as the number of errors decrease away from $\tau_\Pow(2,3)$.
The bound only becomes non-trivial a few errors below $\tau_\Pow(2,3)$, due to the term $0.29\errs/\log q - 1/4$ in the exponent.
For instance, for a $[64,27]$ code over $GF(64)$, then $\tau_\Pow(2,3) = 20\,\nicefrac{1}{4}$, but the bound is only less than $1$ for $\errs \leq 19$, also for the unsimplified bound \eqref{eqn:more_precise_bound}.
Such a penalty is not observed in simulations, however, and seems to be an artefact of our proof.
For the $[64,27]$ code, decoding succeeds almost always with 20 errors (see next section).
Similarly, for a $[256, 63]$ code over $GF(256)$, the bound is only non-trivial for $\errs < 108$, while \eqref{eqn:more_precise_bound} would be slightly better with $\errs < 110$; however, in simulations decoding works almost always up to $\floor{\tau_\Pow(2,3)} = 112$.

Nevertheless, in an asymptotic and relative sense, \cref{prop:failure_prob_23} guarantees that decoding up to $\tau_\Pow(2,3)$ almost always succeeds:

\begin{corollary}
  When $s=2$ and $\ell = 3$, with $n \rightarrow \infty$ while keeping $q/n$, $k/n$ and $\errs/n$ constant, the probability that Power decoding fails goes to 0 when $\errs/n < \tau_\Pow(2,3)/n$.
\end{corollary}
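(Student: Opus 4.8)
The plan is to read the asymptotics off the explicit bound of \cref{prop:failure_prob_23}; no new argument is needed, only a limit computation. First, if $\errs < d/2$ the failure probability is exactly $0$ by \cref{prop:mindist}, so we may assume $d/2 \le \errs < \tau_\Pow(2,3)$ and hence may invoke \cref{prop:failure_prob_23} (its hypothesis $d/2 \le \errs$ holds, and $d/2 < \tau_\Pow(2,3)$ follows automatically). Write $\errs = \rho n$ with $\rho$ fixed. Since $k/n$ is held constant and $\tau_\Pow(2,3) = \tfrac 5 8 n - \tfrac 3 4 (k-1) - \tfrac 3 8$ by \cref{prop:degree_upper_bound}, we have $\tfrac 1 n\tau_\Pow(2,3) \to \tfrac 5 8 - \tfrac 3 4 \tfrac k n$; the hypothesis $\errs/n < \tau_\Pow(2,3)/n$---read as keeping $\rho$ strictly below this limit, which is the only regime not already handled by \cref{prop:mindist}---then yields a constant $c_0 > 0$ with $\tau_\Pow(2,3) - \errs \ge c_0 n$ for all large $n$.

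The crux is the interplay between $q$ and $n$. As $q/n$ is a fixed constant and $q \ge n$, we have $q \to \infty$ and $\log q = \log n + O(1) \in o(n)$. Consequently the two ``correction'' terms occurring in \cref{prop:failure_prob_23}, namely $0.29\,\errs/\log q$ and $\errs \log 10/\log q$, are both $O(n/\log n) = o(n)$. The two leading terms, on the other hand, grow linearly in $n$: in the first case it is $\tau_\Pow(2,3) - \errs \ge c_0 n$, and in the second it is $(d + 3(k-1))/5 = (n + 2(k-1))/5 \ge n/5$. Hence, whichever of the two cases of \cref{prop:failure_prob_23} applies, the exponent of $q^{-8}$ (respectively of $q^{-1}$) is at least $c_1 n$ for some constant $c_1 > 0$ and all sufficiently large $n$.

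Since $q \ge 2$, the bound of \cref{prop:failure_prob_23} is then at most $4\, q^{-c_1 n}$ in either case (the power $q^{-8}$ only helping), and this tends to $0$ as $n \to \infty$---all the faster because $q \to \infty$ simultaneously. This establishes the corollary. The only delicate point is the one isolated above: one must be sure that ``holding the ratios constant'' really keeps the gap $\tau_\Pow(2,3) - \errs$ \emph{linear} in $n$, since an $o(n)$ gap would be overwhelmed by the $o(n)$ correction terms and would make \cref{prop:failure_prob_23} vacuous---this is exactly the regime near $\floor{\tau_\Pow(2,3)}$ where, as the discussion after \cref{prop:failure_prob_23} notes, the bound is weak although simulations still show success.
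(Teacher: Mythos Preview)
Your proof is correct and follows essentially the same approach as the paper's: both simply read the asymptotics off \cref{prop:failure_prob_23}, observing that the gap $\tau_\Pow(2,3)-\errs$ is linear in $n$ while the correction terms $\errs/\log q$ are $o(n)$. You are in fact more explicit than the paper, which treats only the high-error branch and dismisses the low-error case as ``similar'', and which does not spell out why $\errs/(n\log q)\to 0$ or separately handle $\errs<d/2$.
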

\begin{proof}
  Consider the high-error failure probability of \cref{prop:failure_prob_23}:
  \begin{IEEEeqnarray*}{rCl}
    4(q^{-8})^{(\tau_\Pow(2,3)-\errs) - (0.29\errs/\log q-1/4)}
    &\leq& 4(q^{-8n})^{\delta - 0.29\errs/(n\log q) + 1/(4n)} \ ,
  \end{IEEEeqnarray*}
  where $\delta = \tau_\Pow(2,3)/n - \errs/n$.
  Asymptotically $\delta$ approaches some positive constant, while the other terms in the exponent vanishes.
  The low-error case is similar.
\end{proof}

\subsection{Failure Behaviour in Relation to List Decoding}
  It is natural to ask if the failure behaviour of Power decoding is linked to whether or not there are multiple codewords close to the received word, i.e.~the list of codewords that e.g.~the Guruswami--Sudan algorithm would return.
  There seems, however, to be no clear relation like this, as we explain below.

  Consider that $\vec c \in \Code$ was sent and $\vec r$ was received.
  Suppose Power decoding has a decoding radius of $\tau$, and that we have a list decoder of the same decoding radius.
  Consider that $\vec c' \in \Code$ is another codeword and assume that all other codewords are farther from $\vec r$ than $\vec c$ or $\vec c'$.
  Then there are the following possibilities:
  \begin{enumerate}
    \item $\dist(\vec c, \vec r) = \dist(\vec c', \vec r) \leq \tau$ \label{itm:failure_equal}
    \item $\dist(\vec c', \vec r) < \dist(\vec c, \vec r) \leq \tau$ \label{itm:failure_beaten}
    \item $\dist(\vec c, \vec r) < \dist(\vec c', \vec r) \leq \tau$ \label{itm:failure_beating}
    \item $\dist(\vec c, \vec r) \leq \tau < \dist(\vec c', \vec r)$ \label{itm:failure_unique}
    \item $\dist(\vec c', \vec r) \leq \tau < \dist(\vec c, \vec r)$ \label{itm:failure_c_beyond}
    \item $\tau < \dist(\vec c, \vec r), \dist(\vec c', \vec r)$     \label{itm:failure_beyond}
  \end{enumerate}

  Clearly, both Power decoding and the list decoder will fail in recovering $\vec c$ in  \cref{itm:failure_c_beyond} and \cref{itm:failure_beyond}.
  In Items 1--4, the list decoder guarantees to recover $\vec c$ on a list, though for Items 1--3 that list will have length at least 2.

  For Power decoding it is less clear-cut.
  Firstly, for Items 1 and 2, then Power decoding ``fails'' according to the definition given at the beginning of \cref{sec:failure}.
  Indeed, for Item 2, then Power decoding guarantees to return $\vec c'$ or $\fail$.
  For Item 1, however, then \cref{alg:decoder} might be lucky and find $\vec c$, but in all likelihood the obtained solution to \cref{prob:ke_linearised} will be some linear combination of the two solutions corresponding to $\vec c$ and $\vec c'$; probably \cref{line:decoder:compute_f} or at least \cref{line:decoder:verify_dist} of \cref{alg:decoder} will return $\fail$.
  For Items 3 and 4, then Power decoding will probably obtain $\vec c$; but in either case, one can construct examples where it will fail.
  That is, whether or not there is only one codeword within radius $\tau$, then Power decoding might succeed or it might fail.

  That might be surprising at first, so we give examples of these cases.
  Consider $\Code$ to be the $[23,7]_{GF(23)}$ GRS code with evaluation points $\vec\alpha = (0,1,\ldots,22)$ and $\vec\beta = \vec 1$.
  For this code, $\tau_\Pow(2,3) = 9$.
  As an example for Item 3 where Power decoding succeeds, consider the following vectors:
  \begin{IEEEeqnarray*}{rCc;c;c;c;c;c;c;c;c;c;c;c;c;c;c;c;c;c;c;c;c;c;cc;l}
    \vec c_3 &=& (16& 15& 20& 20& 3& 0& 18& 0& 19& 16& 2& 11& 11& 3& 9& 18& 5& 0& 0& 0& 5& 0& 16&) & \in \Code \ , \\
    \vec r_3 &=& (16& 0& 20& 20& 0& 0& 18& 0& 19& 0& 2& 11& 0& 0& 0& 0& 5& 0& 0& 0& 5& 0& 0&) & \ .
  \end{IEEEeqnarray*}
  Then $8 = \dist(\vec r_3, \vec c_3) < \dist(\vec r_3,\vec 0) = \tau = 9$.
  So a list decoder with decoding radius 9 would obtain the list $[\vec 0, \vec c_3]$.
  Power decoding returns $\vec c_3$.
  Indeed, our observation is that Power decoding usually succeeds in the case of Item 3.

  An example for Item 4 where Power decoding fails, even though there is only one nearby codeword, $\vec 0$, is the following received word:
  \begin{IEEEeqnarray*}{rCc;c;c;c;c;c;c;c;c;c;c;c;c;c;c;c;c;c;c;c;c;c;cc;l}
    \vec r_{4} &=& (0& 2& 9& 1& 0& 0& 6& 0& 0& 0& 5& 0& 0& 0& 0& 0& 0& 4& 8& 15& 0& 0& 12) \ .
  \end{IEEEeqnarray*}
  This last example was found by random generation of error vectors of weight 9, after roughly $47\ 000$ successful decoding trials.
  As an aside, the failure probability bound of \cref{prop:failure_prob_23} gives the trivial bound 1 for the failure probability in this case.

\section{Simulation Results}
\label{sec:simulation}

\newcommand\GF[1]{GF(#1)}

The proposed decoding algorithm has been conceptually implemented in Sage v8.0 \cite{stein_sagemath_????}, and is available for download at \url{http://jsrn.dk/code-for-articles}.
The implementation follows the approach of \cref{sec:pade}, computing a solution basis using the Mulders--Storjohann algorithm \cite{mulders_lattice_2003}.
The asymptotic complexity of the implementation is therefore $O(\ell^3s^2n^2)$.

To evaluate the failure probability, we have selected a range of code and decoding parameters and run the algorithm for a large number of random errors.
More precisely, for each set of parameters, and each decoding radius $\tau$, we have created $N = 10^5$ random errors of weight exactly $\tau$ and attempted to decode a received word $\vec r = \vec c + \vec e$ for some randomly chosen $\vec c$ (though, of course, \cref{prop:power_gao_inv} implies that shifting by $\vec c$ makes no difference).
We have limited the decoding radii used to being $\floor{\tau_\Pow(s,\ell)} + \{-1,0,1\}$.
The results are listed as \cref{tbl:simulation}.

As is evident, $\tau_\Pow(s,\ell)$ very clearly describes the number of errors we can rely on correcting: the probability of failing appears to decay exponentially with $\tau_\Pow(s,\ell) - \errs$, as we might expect if extrapolating from the bound of \cref{prop:failure_prob_23}.
In fact, the failure probability is so low that it is difficult to observe failing cases for randomly selected errors.

The case having the highest failure rate is the very low-rate code $[21,3]_{\GF{23}}$.
For such a low-rate code, $\tau_\Pow(s,\ell)$ is quite close to the covering radius, and there is a significant probability that a random error will yield a received word which is closer to another codeword.
In this case, Power decoding always fails.
We performed another simulation for this code with $10^4$ random errors of weight exactly 14 and decoding using the Guruswami--Sudan list decoder.
This simulation gave a $16.1\%$ chance that another codeword was as close or closer to the sent codeword.
Thus most of the $19.7\%$ failures of Power decoding stem from this.

\begin{table}
  \centering
 \begin{tabular}{lllllll}
  $[n,k]_q$       & $(s,\ell)$ & $\tau_\Pow$             & $P_f(\floor{\tau_\Pow} - 1)$ & $P_f(\floor{\tau_\Pow})$ & $P_f(\floor{\tau_\Pow} + 1)$ & $\tau_{\rm bnd}$ \\ \midrule
$[21, 3]_{23}$    & $(6,19)$   & $14\,\nicefrac{1}{120}$ & $ 7.43 \times 10^{-3} $      & $ 1.97 \times 10^{-1} $  & $ 1 $                        &                  \\
$[24, 7]_{25}$    & $(2,3)$    & $10\,\nicefrac{1}{4}$   & $0$                          & $ 2.27 \times 10^{-3} $  & $ 1 $                        & 8 (9)            \\
$[32, 10]_{37}$   & $(2,4)$    & $13$                    & $0$                          & $ 2.78 \times 10^{-2} $  & $ 1 $                        &                  \\
$[64, 27]_{64}$   & $(2,3)$    & $20\,\nicefrac{1}{4}$   & $0$                          & $ 3.10 \times 10^{-4} $  & $ 1 $                        & 19 (19)              \\
$[68, 31]_{71}$   & $(3,4)$    & $20$                    & $0$                          & $0$                      & $ 1$                         &                  \\
$[125, 51]_{125}$ & $(4,6)$    & $42$                    & $0$                          & $0$                      & $ 1$                         &                  \\
$[256, 63]_{256}$ & $(2,4)$    & $116\,\nicefrac{2}{5}$  & $0$                          & $0$                      & $ 1- 3.00 \times 10^{-4} $   &
\end{tabular}
\caption{Simulation results. $P_f(\tau)$ denotes the observed probability of decoding failure (no result or wrong result) with random errors of weight exactly $\tau$.
  $\tau_{\rm bnd}$ indicates the number of errors $\errs$ for which \cref{prop:failure_prob_23} yields a bound $< 1$ (where applicable); in parentheses is if the probability estimate of \eqref{eqn:more_precise_bound} is used instead.
}
  \label{tbl:simulation}
\end{table}

\section{Efficient Solving of the Key Equations}
\label{sec:pade}

To solve \cref{prob:ke_linearised}, we will leverage existing algorithms by modelling \cref{prob:ke_linearised} as a \emph{simultaneous Hermite \Pade approximation} (SH \Pade), a well-studied computational problem.
This problem does not fit perfectly to \cref{prob:ke_linearised}, so to describe the modelling from one to the other we will introduce some technical notions pertaining to the solution sets of SH \Pade problems.
The upshot is \cref{alg:keyeq_of_pade} and \cref{cor:keyeq_of_pade} stating that we can rely completely on existing sophisticated algorithms to solve \cref{prob:ke_linearised} in complexity $\Oapp(\ell^\omega n)$ (or the faster $\Oapp(s^2 \ell^{\omega-1} n)$ if we rely on the unpublished \cite{rosenkilde_algorithms_2018}).

\begin{definition}[Simultaneous Hermite \Pade approximation]
  \label{def:pade}
  Given $A \in \F[x]^{s \times \ell}$ and $\Gamma_1,\ldots,\Gamma_{\ell} \in \F[x]$, as well as degree bounds $T_i, N_t \in \ZZ_{\geq 0}$, compute, if it exists, $\vec \lambda = (\lambda_1,\ldots,\lambda_s) \in \F[x]^s$ such that $\deg \lambda_i < T_i$, and
  \[
    \vec \lambda A \equiv \vec \psi \mod (\Gamma_1,\ldots,\Gamma_\ell) \ ,
  \]
  where $\vec \psi = (\psi_1,\ldots,\psi_\ell) \in \F[x]^\ell$ satisfies $\deg \psi_t < N_t$.
  (The modulo operation is element-wise, i.e.~the $i$'th entry of $\vec \lambda A$ is congruent to $\psi_i$ modulo $\Gamma_i$.)
\end{definition}

SH \Pade approximations have appeared elsewhere in coding theory: for the interpolation step of Guruswami--Sudan and the Wu decoding algorithms for Reed--Solomon and other codes \cite{zeh_interpolation_2011,chowdhury_faster_2015}, and for decoding of Hermitian algebraic-geometry codes \cite{nielsen_sub-quadratic_2015}.
Computing solutions to these very general forms of \Pade approximation goes back much further in the computer algebra community, though $\Gamma_i$ are usually powers of $x$, see e.g.~\cite{barel_general_1992,beckermann_uniform_1994} and the references therein.
In the generality above, the problem was first considered in \cite{nielsen_list_2013} solved using row reduction of $\F[x]$ matrices, and shortly thereafter in \cite{chowdhury_faster_2015} solved as an $\F$-linear system exhibiting block-Hankel structure \cite{bostan_solving_2008}.
Even more general notions include minimal approximant basis, or order basis \cite{giorgi_complexity_2003,gupta_triangular_2012}, and relation bases \cite{neiger_fast_2016}.

First we define a measure of how far a solution is from the degree bounds:
\begin{definition}
  For a given SH \Pade problem with $A \in \F[x]^{s \times \ell}$ as well as $\Gamma_i,T_i, N_t$, and a vector $\vec \lambda = (\lambda_1,\ldots,\lambda_s)$, the \emph{discrepancy} $\delta \in \ZZ$ of $\vec \lambda$ (wrt.~the SH \Pade problem) is given as:
  \[
    \delta = \max\Big( \max_{i=1,\ldots,s}(\deg \lambda_i - T_i),\ \max_{t=1,\ldots,\ell}(\deg \psi_t - N_t) \Big) \ ,
  \]
  where $\vec\psi = (\psi_1,\ldots,\psi_\ell) = \vec \lambda A \rem (\Gamma_1,\ldots,\Gamma_\ell)$.
\end{definition}
Note that a $\vec\lambda$ is a solution to an SH \Pade problem if and only if its discrepancy is negative.
We wish to link the type of degree restrictions of \cref{def:pade} with those of \cref{prob:ke_linearised}: for this, observe that a solution $\vec \lambda = (\lambda_1,\ldots,\lambda_s) \in \F[x]^s$ has a discrepancy of $\deg \lambda_1 - T_1$ if and only if
\begin{IEEEeqnarray*}{rCl;l}
  \deg \lambda_1 - T_1 &\geq& \deg \lambda_{i+1} - T_{i+1} & \textrm{ for } i = 1,\ldots,s-1 \\
  \deg \lambda_1 - T_1 &\geq& \deg \psi_t - N_t & \textrm{ for } t = 1,\ldots,\ell \ ,
\end{IEEEeqnarray*}
where $\vec \psi = (\psi_1,\ldots,\psi_\ell) = \vec\lambda A \rem (\Gamma_1,\ldots,\Gamma_\ell)$.
This leads to the following lemma:
\begin{lemma}
  \label{lem:keyeq_of_pade}
  Consider a received word $\vec r$, let $\tau \in \ZZ_{\geq 0}$ be a chosen decoding radius and assume at most $\tau$ errors occurred.
  Consider the SH \Pade approximation defined by $A = [ A_{i,t} ] \in \F[x]^{s \times \ell}$ as well as $\Gamma_t, T_i$ and $N_t$, given as:
  \begin{IEEEeqnarray*}{rCl}
    A_{i+1,t} &=&
    \left\{ \begin{array}{ll}
              \binom t i R^{t-i} G^i   & \textnormal{for } t = 1,\ldots,s-1 \textnormal{ and } i=0,\ldots, s-1\\
              \binom t i R^{t-i} G^i \modop G^s   & \textnormal{for } t = s,\ldots,\ell \textnormal{ and } i=0,\ldots, s-1 \\
            \end{array}
            \right. \\
  \Gamma_t &=&
    \left\{ \begin{array}{ll}
               x^{s \tau + t(n-1) + 1}  & \textnormal{for } t = 1,\ldots,s-1 \\
              G^s  & \textnormal{for } t = s, \ldots, \ell \\
            \end{array}
            \right. \\
  T_{i+1} &=& s \tau - i + 1 \quad\textnormal{ for } i = 0,\ldots,s-1 \\
  N_t &=& s\tau + t(k - 1) + 1 \quad\textnormal{ for } t = 1,\ldots,\ell \ .
  \end{IEEEeqnarray*}
  Let $\vec\lambda = (\lambda_1,\ldots,\lambda_s)$ be a solution to this SH \Pade approximation with minimal discrepancy among solutions whose discrepancy equals $\deg \lambda_1 - T_1$.
  Then $\vec \lambda, \vec \psi$ is a minimal solution to the instance of \cref{prob:ke_linearised} corresponding to $\vec r$, where $\vec\psi = \vec \lambda A \rem (\Gamma_1,\ldots,\Gamma_\ell)$.
\end{lemma}
\begin{proof}
  We first prove that $\vec\lambda$ is well-defined, that is there is such a solution to the SH \Pade problem.
  We claim that if $\vec{\lambda'}, \vec{\psi'}$ is a minimal solution to the instance of \cref{prob:ke_linearised} (which we know exists) then $\vec \lambda'$ is such a solution to the SH \Pade approximation.
  Note first by the assumption on decoding radius that $\deg \lambda_1' \leq s\tau$.
  Then Item 2 of \cref{prob:ke_linearised} implies $\deg \lambda'_{i+1} \leq s\tau - i < T_{i+1}$ for $i = 1,\ldots,s-1$.
  Similarly, by Item 3 then $\deg \psi'_t \leq s \tau + t(k-1) < N_t$ for $t = 1,\ldots,\ell$.
  Finally, observe that
  \begin{IEEEeqnarray}{rClCrCl}
    \deg \lambda'_{i+1} + i &\leq& \deg \lambda'_1   &\iff&  \deg \lambda'_{i+1} - T_{i+1} &\leq& \deg \lambda'_1 - T_1   \notag \\
    \deg \psi'_t - t(k-1) &\leq& \deg \lambda'_1 &\iff& \deg \psi'_t - N_t &\leq& \deg \lambda'_1 - T_1 \ .
    \label{eqn:keyeq_of_pade_lam}
  \end{IEEEeqnarray}
  so Item 2 and Item 3 of \cref{prob:ke_linearised} imply the discrepancy condition.

  The other direction is very similar: assume now $\vec\lambda$ is a solution to the SH \Pade problem with discrepancy $\deg\lambda_1-T_1$, and $\vec\psi = (\psi_1,\ldots,\psi_\ell) = \vec\lambda A \rem (\Gamma_1,\ldots,\Gamma_\ell)$.
  Item 1b of \cref{prob:ke_linearised} is obviously satisfied; for Item 1a, we know $\vec \lambda A_{*,t} \equiv \psi_t \mod \Gamma_t$ for $t = 1,\ldots,\ell$, where $A_{*,t}$ denotes the $t$'th column of $A$.
  Note that for $t < s$ we have
  \[
    \deg(\vec \lambda A_{*,t})
    \leq \max_{i=0,\ldots,s-1}\big((T_{i+1} - 1) + (\deg R^{t-i}G^i)\big)
    \leq s \tau + t(n-1) \ .
  \]
  For these values of $t$ we have $\Gamma_t = x^{s \tau + t(n-1) + 1}$ and so the congruence lifts to equality, i.e.~Item 1a.
  Item 2 and Item 3 of \cref{prob:ke_linearised}, follow directly from the discrepancy condition on $\vec\lambda$ together with \eqref{eqn:keyeq_of_pade_lam}.

  For minimality, assume conversely that $\deg\lambda'_1 < \deg \lambda_1$.
  Then $\deg \lambda'_1 - T_1 < \deg \lambda_1 - T_1$.
  Since $\vec \lambda'$ is a solution to the SH \Pade problem satisfying the discrepancy restriction, then this contradicts the minimality of $\vec\lambda$.
\end{proof}

\subsection{Solution bases for \Pade approximations}
\label{ssec:pade_solution}

\cref{lem:keyeq_of_pade} states that special solutions to a specific SH \Pade problem are actually minimal solutions to \cref{prob:ke_linearised}.
Many algorithms for solving SH \Pade problems, and in particular the fastest ones known, actually find a basis of \emph{all} solutions, for a notion of ``basis'' which we introduce momentarily.
We will now show that such a basis must contain a solution satisfying the constraints of \cref{lem:keyeq_of_pade} and hence will be a minimal solution to \cref{prob:ke_linearised}.
This section uses a number of concepts which are standard in polynomial matrix literature, but less so in coding theory.
They will not be used outside this section.

The degree of a vector $\vec v \in \F[x]^m$ or matrix $A \in \F[x]^{m' \times m}$ is the maximal degree of its entries.
The \emph{leading matrix} of $A$, denoted $\LM(A) \in \F^{m' \times m}$, has $(i,j)$'the entry equal to the coefficient of $x^{d_i}$ of $A_{i,j}$, where $d_i$ is the degree of the $i$'th row of $A$.
The \emph{leading indices} of $\vec v$, denoted $\pivs(\vec v) \subset \{ 1,\ldots,m \}$, are the indices of $\vec v$ which have degree $\deg \vec v$.
In other words, $\LM(A)$ is non-zero exactly at the leading indices of the rows of $A$.
We also introduce \emph{shifted} variants of the above notions: given a ``shift'' $\vec h \in \ZZ^m$, then $\deg_{\vec h} \vec v := \deg(\vec v x^{\vec h})$, where $x^{\vec h}$ is the diagonal matrix with entries $x^{h_1},\ldots,x^{h_m}$.
Similarly $\deg_{\vec h} A := \deg(A x^{\vec h})$; $\LM_{\vec h}(A) := \LM(A x^{\vec h})$; and $\pivs_{\vec h}(\vec v) := \pivs(\vec v x^{\vec h})$.
Note that if $\vec h$ has negative entries, this notation may formally pass over the ring of Laurent series.

To relate some of these concepts to the previous section, note that if we set $\vec h = (-T_1,\ldots,-T_s,-N_1,\ldots,-N_\ell)$, then for any $\vec \lambda \in \F[x]^s$ the discrepancy of $\vec\lambda$ is exactly $\deg_{\vec h}(\vec \lambda | \vec \psi)$, where $\vec \psi = \vec \lambda A \rem (\Gamma_1,\ldots,\Gamma_\ell)$.
Further, the requirement of \cref{lem:keyeq_of_pade}, i.e.~that the discrepancy of $\vec\lambda$ equals $\deg \lambda_1 - T_1$, is equivalent to requiring $1 \in \pivs_{\vec h}(\vec \lambda | \vec \psi)$.

\begin{lemma}
  \label{lem:sh_pade_solutions}
  Consider a given SH \Pade problem with $A \in \F[x]^{s \times \ell}$ as well as $\Gamma_t, T_i$ and $N_t$.
  A vector $\vec \lambda \in \F[x]^s$ is a solution iff there is $\vec \psi \in \F[x]^\ell$ such that $\deg_{\vec h}(\vec\lambda | \vec \psi) < 0$ and $(\vec \lambda | \vec \psi)$ is in the row space of $M$, where
  \begin{IEEEeqnarray*}{rCl}
    M &=& \left[
        \begin{array}{@{}c|c@{}}
          I_{s \times s} & A \\\hline
          & \diag(\Gamma_1,\ldots,\Gamma_\ell) ,
        \end{array}\right] \\
    \vec h &=& (-T_1,\ldots,-T_s,-N_1,\ldots,-N_\ell) \ .
  \end{IEEEeqnarray*}
\end{lemma}
\begin{proof}
  The row space of $M$ is exactly the vectors $(\vec\lambda | \vec \psi)$ where $\vec \lambda A \equiv \vec \psi \mod (\Gamma_1,\ldots,\Gamma_\ell)$.
  For such vectors, then $\vec \lambda$ is solution when it has negative discrepancy, which by our earlier observation is exactly when it has negative $\vec h$-degree.
\end{proof}

We say that matrix $A$ is \emph{$\vec h$-row reduced} if $\LM_{\vec h}(A)$ has full row rank, see \cite[Ch.~6.3.2]{kailath_linear_1980} and \cite{barel_general_1992}.
For any $M \in \F[x]^{m' \times m}$ with full row rank, there always exists another matrix $A$ which is $\vec h$-row reduced and has the same row space as $M$; see e.g.~\cite{mulders_lattice_2003} for a succinct iterative algorithm.
Row reduced matrices derive their interest from having minimal row degrees of all possible bases of the same row space; this property can be generalised to the \emph{predictable degree property} \cite[Ch.~6.3.2]{kailath_linear_1980}, of which we will use the following variant:

\begin{proposition}
  \label{prop:pdp}
  Let $\vec h \in \ZZ^m$ be a shift, let $A \in \F[x]^{m' \times m}$ be row reduced, and let $\vec a_1,\ldots,\vec a_{m'}$ be the rows of $A$.
  Let $\vec v \in \F[x]^m$ be any vector in the row space of $A$.
  Then there exists $\vec q = (q_1,\ldots,q_{m'}) \in \F[x]^{m'}$ such that $\vec v = \vec q A$ and
  \begin{IEEEeqnarray*}{rCl}
    \deg_{\vec h} \vec v &=& t \ , \textrm{ where } t = \max_{i=1,\ldots,m'}(\deg q_i + \deg_{\vec h} \vec a_i) , \ \textrm{ and } \\
    \pivs_{\vec h}(\vec v) &\subseteq& \bigcup_{i \in I} \pivs_{\vec h}(\vec a_i) \ , \textrm{ where }
    I = \{ i \mid \deg q_i = \deg_{\vec h} \vec v - \deg_{\vec h} \vec a_i \} \ .
  \end{IEEEeqnarray*}
\end{proposition}
\begin{proof}
  The existence of $\vec q$ is trivial since $\vec v$ is in the row space of $A$.
  Note that
  \begin{equation}
    \label{eqn:pdp}
    \vec v x^{\vec h} = \vec q A x^{\vec h} \ ,
  \end{equation}
  and so $\deg_{\vec h}\vec v \leq t$.
  Let $\vec{\grave q} \in \F^m$ be the scalar vector whose $i$'th entry is the leading coefficient of $q_i$ if $i \in I$ and 0 otherwise.
  Let $\vec{\grave v} \in \F^m$ be the scalar vector of $x^t$'th coefficients of $\vec v x^{\vec h}$.
  Then \eqref{eqn:pdp} implies $\vec{\grave v} = \vec{\grave q}\LM_{\vec h}(A)$.
  But then $\vec{\grave v} \neq \vec 0$ since $\LM_{\vec h}(A)$ has full row rank.
  Therefore $\deg_{\vec h}(\vec v) = t$.
  Further, $\pivs_{\vec h}(\vec v)$ is then the indices of non-zero entries of $\vec{\grave v}$, i.e.~the non-zero entries of $\vec{\grave q}\LM_{\vec h}(A)$, i.e.~a subset of the non-zero columns of $\LM_{\vec h}(A')$, where $A'$ is the rows of $A$ indexed by $I$.
\end{proof}

We are now in a position to define a notion of ``basis'' of all solutions:
\begin{definition}
  Consider a given SH \Pade problem with $A \in \F[x]^{s \times \ell}$ as well as $\Gamma_t, T_i$ and $N_t$.
  Let $B'$ be any matrix which is left-equivalent\footnote{%
    I.e.~there exists an invertible matrix $U \in \F[x]^{(s+\ell)\times(s+\ell)}$ such that $B' = UM$.
  } to $M$ and $\vec h$-row reduced, where $M$ and $\vec h$ are as in \cref{lem:sh_pade_solutions}.
  Let $B \in \F[x]^{m \times (s+\ell)}$ consist of the rows of $B'$ with negative $\vec h$-degree.
  Then $B$ is a \emph{solution basis} to the SH \Pade problem.
\end{definition}

Not only will the rows of a solution basis $B$ be solutions themselves; the main point is that they will \emph{span} every single solution in a predictable way:
any solution must be a linear combination of the rows of the complete, $\vec h$-row reduced matrix $B'$, but due to the predictable degree property, for the $\vec h$-degree of a vector to be negative, it must be spanned only by vectors with negative $\vec h$-degree themselves, and with coefficients of bounded degree.

We now see an easy algorithm for solving SH \Pade problems: set up $M$ and compute a row reduced matrix $B'$ which is left-equivalent to $M$.
This could e.g.~be done using the iterative Mulders--Storjohann algorithm \cite{storjohann_high-order_2003}, or using the reduction from row reduction to order basis \cite{gupta_triangular_2012,giorgi_complexity_2003}.
The latter yields a complexity of $\Oapp( (s+\ell)^\omega D )$, where $D = \max T_i + \max \deg g_t$.

We will continue the discussion a bit further, since a result from \cite{rosenkilde_algorithms_2018} -- which is not yet published -- allows a faster algorithm if we only compute the first $s$ columns of a solution basis:

\begin{definition}
  \label{def:solution_spec}
  Consider a given SH \Pade problem with $A \in \F[x]^{s \times \ell}$ as well as $\Gamma_t, T_i$ and $N_t$.
  A \emph{solution specification} is a matrix $L \in \F[x]^{m \times s}$ and discrepancies $\delta_1,\ldots,\delta_m < 0$ such that there is a matrix $\grave B \in \F[x]^{m \times \ell}$ for which $[ L \mid \grave B ]$ is a solution basis, and whose rows have $\vec h$-degree $\delta_1,\ldots,\delta_m$.
\end{definition}

\begin{proposition}[\!\!\protect{\cite{rosenkilde_algorithms_2018}}]
  \label{prop:pade_complexity}
  Consider an SH \Pade approximation problem with $A \in \F[x]^{s \times \ell}$ as well as $\Gamma_i, T_i, N_t$, satisfying $s < \ell$, and $T_i < \deg \lcm(\Gamma_1,\ldots,\Gamma_\ell)$ for $i=1,\ldots,s$, and $\deg N_t < \deg \Gamma_t$ for $t=1,\ldots,\ell$.
  There exists an algorithm which computes a solution specification using
  \[
    O(\ell^{\omega-1} \PolyMult{sD} (\log sD)(\log sD/\ell)^2) \subset \Oapp(s\ell^{\omega-1}D)
  \]
  operations in $\F$, where $D = \max_i T_i + \max_t \deg \Gamma_t$.
\end{proposition}

To solve \cref{prob:ke_linearised} using SH \Pade approximations in the complexity of \cref{prop:pade_complexity}, the only remaining piece is to prove that a solution specification must contain a row for which we can apply \cref{lem:keyeq_of_pade}.

\begin{proposition}
  Consider an SH \Pade approximation problem with $A \in \F[x]^{s \times \ell}$ as well as $\Gamma_i, T_i, N_t$.
  If there exists a solution $\vec\lambda \in \F[x]^s$ such that its discrepancy equals $\deg\lambda_1 - T_1$, then such a solution with minimal discrepancy will appear in a solution specification.
\end{proposition}
\begin{proof}
  Let $B \in \F[x]^{(s+\ell) \times (s+\ell)}$ be a completed, $\vec h$-row reduced matrix, left-equivalent to $M$, corresponding to the solution specification, where $\vec h$ and $M$ are as in \cref{lem:sh_pade_solutions}.
  Let $\vec \lambda \in \F[x]^s$ be a solution with minimal discrepancy satisfying $\deg\lambda_1 - T_1$.
  Then there is $\vec \psi \in \F[x]^\ell$ and $\vec q = (q_1,\ldots,q_{s+\ell}) \in \F[x]^{s + \ell}$ such that $(\vec \lambda | \vec\psi) = \vec q B$ and $\deg_{\vec h}(\vec \lambda | \vec \psi) = \deg\lambda_1 - T_1$, i.e.~$1 \in \pivs_{\vec h}(\vec \lambda | \vec \psi)$.
  By \cref{prop:pdp} then there is a row $\vec b_i$ of $B$ with $1 \in \pivs_{\vec h}(\vec \lambda | \vec \psi)$ and $q_i \neq 0$.
  Then $\deg_{\vec h} \vec b_i \leq \deg_{\vec h}(\vec \lambda | \vec \psi)$, so if we write $\vec b_i = (\vec \lambda' | \vec \psi')$ then $\vec \lambda'$ is a solution with discrepancy $\deg \lambda'_1 - T_1 \leq \deg \lambda_1 - T_1$.
  To not contradict our choice of $\vec\lambda$, then equality must hold, and $\vec \lambda'$ is a satisfactory solution appearing in a solution specification.
\end{proof}

A complete algorithm for solving Problem 4 using solution specifications of SH \Pade approximations is given as \cref{alg:keyeq_of_pade}.

\begin{algorithm}[t]
  \caption{Solving \cref{prob:ke_linearised} using SH \Pade approximation}
  \label{alg:keyeq_of_pade}
  \begin{algorithmic}[1]
    \Require{$R, G \in \F[x]$, $s, \ell, \tau \in \ZZ_+$ with $s \leq \ell$.}
    \Ensure{A minimal solution $(\vec \lambda | \vec \psi)$ to \cref{prob:ke_linearised} if one exist, or $\fail$\;}
    \State Compute $A \in \F[x]^{s \times \ell}$ as in \cref{lem:keyeq_of_pade}, and set $\Gamma_t, T_{i+1}, N_t$ as in that lemma.
    \State $L, \vec \delta \ass $ solution specification to the SH \Pade problem of $A$ and $\Gamma_t, T_{i+1}, N_t$. \label{line:alg_keyeq_of_pade}
    \State $\vec\lambda \ass $ a minimal $\vec h$-degree row of $L$ among rows with $1 \in \pivs_{\vec h}(\vec \lambda)$, where $\vec h$ is as in \cref{lem:sh_pade_solutions}.
    If there is no such row, \Return $\fail$.
    \State $\vec \psi \ass \vec \lambda A \rem (\Gamma_1,\ldots,\Gamma_\ell)$.
    \State \Return $(\vec\lambda | \vec \psi)$.
  \end{algorithmic}
\end{algorithm}

\begin{corollary}
  \label{cor:keyeq_of_pade}
  \cref{alg:keyeq_of_pade} is correct.
  It has complexity $O(\ell^\omega \PolyMult{s n} (\log sn)) \subset \Oapp(s\ell^{\omega}n)$ operations in $\F$, using \cite{gupta_triangular_2012,giorgi_complexity_2003} for \cref{line:alg_keyeq_of_pade}.
  Using the results of \cite{rosenkilde_ne_nielsen_algorithms_2016}, it has complexity $O(\ell^{\omega-1} \PolyMult{s^2n} (\log sn)(\log sn/\ell)^2) \subset \Oapp(s^2\ell^{\omega-1}n)$.

  The same expressions hold for the complexity of \cref{alg:decoder}.
\end{corollary}
\begin{proof}
  Correctness follows from the results of this section and the last;
  note that the requirements of \cref{prop:pade_complexity} are satisfied in our case and that $D \in O(sn)$.
  For complexity, we merely need to argue that computing the solution specification of the SH \Pade approximation dominates.
  Indeed, $A$ can be computed using dynamic programming in $O(\ell s \PolyMult{s n})$ by remarking that $R^{t-i} G^i$ can be computed as the product of two previously computed terms of roughly half the size.
  The only other non-trivial computation is that of $\vec\psi$ which can also be carried out in complexity $O(\ell s \PolyMult{s n})$.
\end{proof}

\begin{remark}
  For short block-lengths, it can be of interest to consider the computational complexity when not using fast arithmetic, i.e.~taking $\PolyMult{n} = O(n^2)$ and $n^\omega = O(n^3)$.
  In this regime, a much simpler algorithm than those mentioned in \cref{cor:keyeq_of_pade} is to compute a solution basis by applying the Mulders--Storjohann row-reduction algorithm \cite{mulders_lattice_2003}.
  This yields the complexity $O(\ell^3 s^2 n^2)$, which is similar to complexities of interpolation algorithms for the Guruswami--Sudan in this regime, see e.g.~\cite{nielsen_decoding_1998,zeh_interpolation_2011,lee_list_2008}.
\end{remark}

\section{Re-Encoding}
\label{sec:reencoding}

\def\hc{\hat{\vec c}}
\def\hr{\hat{\vec r}}
\def\hf{\hat f}

``Re-Encoding'' is a simple technique invented by K\"otter and Vardy, originally for reducing the computational burden of the interpolation step in the Guruswami--Sudan algorithm \cite{kotter_complexity_2003}.
It is especially powerful when using different multiplicities at each point, such as in the K\"otter--Vardy soft-decision decoding version of Guruswami--Sudan \cite{kotter_algebraic_2003}.
For the regular Guruswami--Sudan, and in usual asymptotic analysis where $k/n$ is considered a constant, re-encoding does not change the asymptotic cost; however, it can have a significant practical impact on the running time, especially for higher-rate codes.
We will now show that the re-encoding transformation easily applies to Power decoding as well.

Consider that $\hr$ is the received word.
Using Lagrange interpolation, we can easily compute the unique $\hc = \ev(\hf) \in \Code$ such that $\hc$ and $\hr$ coincide on the first $k$ positions.
Clearly, decoding $\vec r = \hr - \hc$ immediately gives a decoding of $\hr$, and thanks to \cref{prop:power_gao_inv} we know Power decoding will succeed on $\vec r$ if and only if it succeeds on $\vec \hr$.
The idea of re-encoding is that the leading $k$ zeroes of the resulting $\vec r$ might be utilised in the decoding procedure to reduce the computation cost of decoding $\vec r$.

Assume therefore for this section that $\vec r$ is the received word after re-encoding and therefore has $k$ leading zeroes.
That means $\hat G \mid R$ where $\hat G = \prod_{i=1}^k (x-\alpha_i)$.
Consider the linearised key equations of \cref{prob:ke_linearised}.
Each of them are now divisible by $\hat G^{\min(s,t)}$, and so we obtain:

\begin{IEEEeqnarray*}{r?l?l}
  1a') & \psi_t/\hat G^t         =      \sum_{i=0}^{t} \lambda_{i+1} \cdot \left(\binom t i R^{t-i} G^i/\hat G^t\right)  \ ,           & \textrm{for } t=1, \ldots, s-1  \\
  1b') & \psi_t/\hat G^s         \equiv \sum_{i=0}^{s-1} \lambda_{i+1} \cdot \left(\binom t i R^{t-i} G^i/\hat G^s\right) \mod (G/\hat G)^s \ , & \textrm{for } t=s, \ldots, \ell \ .
\end{IEEEeqnarray*}

The elements $\grave \psi_t \defeq \psi_t/\hat G^{\min(s,t)}$ and $R^{t-i} G^i/\hat G^{\min(s,t)}$ are all polynomials, but of much lower degree than before.
Thus, we can solve for $\lambda_i$ and $\grave \psi_t$ directly which has fewer coefficients.
The degree restriction on $\grave \psi_t$ becomes
\[
  \deg \lambda_1 + t(k-1) - \min(s,t)k \geq \deg \grave \psi_t \ .
\]

The complete decoding algorithm is exactly as \cref{alg:decoder} with \cref{line:decoder:solve_keyeq} replaced by the re-encoded key equations, and where $f$ in \cref{line:decoder:compute_f} can be computed as $ f = \grave \psi_1 \hat G / \lambda_1$.

To solve the re-encoded key equations, we proceed exactly as before: the following is an analogue of \cref{lem:keyeq_of_pade} linking the restrictions on $\lambda_i$ and $\grave \psi_t$ to an SH \Pade approximation, whose proof is analogous to that of  \cref{lem:keyeq_of_pade}.

\begin{lemma}
  \label{lem:reenc_of_pade}
  Consider a received word $\vec r$ whose first $k$ positions are 0, let $\tau \in \ZZ_{\geq 0}$ be a chosen decoding radius and assume at most $\tau$ errors occurred.
  Consider the SH \Pade approximation defined by $A = [ A_{i,t} ] \in \F[x]^{s \times \ell}$ as well as $\Gamma_t, T_i$ and $N_t$, given as:
  \begin{IEEEeqnarray*}{rCl}
    A_{i+1,t} &=&
    \left\{ \begin{array}{ll}
              \binom t i R^{t-i} G^i/\hat G^{t}   & \textnormal{for } t = 1,\ldots,s-1 \textnormal{ and } i=0,\ldots, s-1\\
              \binom t i R^{t-i} G^i/\hat G^{s} \modop (G/\hat G)^s   & \textnormal{for } t = s,\ldots,\ell \textnormal{ and } i=0,\ldots, s-1 \\
            \end{array}
            \right. \\
  \Gamma_t &=&
    \left\{ \begin{array}{ll}
               x^{s \tau + t(n-k-1) + 1}  & \textnormal{for } t = 1,\ldots,s-1 \\
              (G/\hat G)^s  & \textnormal{for } t = s, \ldots, \ell \\
            \end{array}
            \right. \\
  T_{i+1} &=& s \tau - i + 1 \quad\textnormal{ for } i = 0,\ldots,s-1 \\
  N_t &=& s\tau + t(k - 1) - \min(s,t)k + 1 \quad\textnormal{ for } t = 1,\ldots,\ell \ .
  \end{IEEEeqnarray*}
  Let $\vec\lambda = (\lambda_1,\ldots,\lambda_s)$ be a solution to this SH \Pade approximation with minimal discrepancy among solutions whose discrepancy equals $\deg \lambda_1 - T_1$.
  Then $\vec \lambda, \vec{\grave \psi}$ is a minimal solution to the instance of \cref{prob:ke_linearised} corresponding to $\vec r$, where $\vec{\grave \psi} = \vec \lambda A \rem (\Gamma_1,\ldots,\Gamma_\ell)$.
\end{lemma}

As mentioned, the asymptotic complexity of solving the SH \Pade approximation of \cref{lem:reenc_of_pade} is not lower than that of \cref{cor:keyeq_of_pade}, in the usual asymptotic regime where we take $k \in \Theta(n)$.
However, considering \cref{prop:pade_complexity}, the expression $D$ becomes $s(\tau + n-k)$ rather than $s(\tau + n)$.
This should give a noticeable, constant factor speed-up for the complete decoding algorithm.

\ifExtended

\section{Syndrome Key Equations}
\label{sec:syndrome}

As described in \cref{sec:preliminaries}, the first key equation decoding algorithm was based on the notion of syndrome polynomial \cite{berlekamp_algebraic_1968}, and similarly, Power decoding without multiplicities was first described using a similar list of key equations \cite{schmidt_syndrome_2010}.
The key equations of \cref{thm:power_key} can similarly be rewritten to be based on syndrome polynomials, which we will show in this section.
As is usual for syndrome-formulated key equations, we will assume that $0$ is not used as an evaluation point.
Therefore $x \nmid G$.
Furthermore, due to a non-essential technicality, we will assume $s < n$.
If this did not hold, the following analysis of parameters would be slightly more complicated but not impossible.

Recall the reversal operator $\rev[d]{p}$ which we defined in \cref{ssec:classical_ke}.
Define for a given value of the multiplicity $s$ the following variants of the powered Lagrange interpolant $R$ as well as a generalised notion of syndrome:

\begin{IEEEeqnarray*}{rCl+rCl}
  \label{eqn:syndrome_generalised}
  R\T{i,t} &\defeq&   R^{t-i} \modop G^{s-i} &
  S\T{i,t} &=& \frac {\rev{R\T{i,t}}} {\rev G^{s-i}} \ .
\end{IEEEeqnarray*}
Note the degree that the reversal-operator on $\rev{R\T{i,t}}$ uses: if $t-i \leq s-i$ then $R\T{i,t} = R^{t-i}$ so the degree upper bound is $(t-i)(n-1)$. 
If $t-i > s-i$ then $\deg R^{t-i} > \deg G^{s-i}$ since we have assumed $s < n$, and therefore $\deg R\T{i,t} \leq (s-i)n - 1$.

If $s=1$ then $S\T{1,1}$ equals the classical syndrome polynomial $S$ which we used in \cref{ssec:classical_ke}, and $S\T{1,t}$ equals the syndromes $S\T{t}$ discussed in \cref{ssec:power_ke}.
The syndromes $S\T{i,t}$ also appear (with a slightly different definition) in the Interpolation key equations for Guruswami--Sudan by Gentner et al.~\cite{zeh_interpolation_2011}.
We can then formulate the---markedly more involved---syndrome variant of \cref{thm:power_key}:

\begin{theorem}
  \label{thm:power_key_syn}
  For any $s, \ell \in \ZZ_+$ with $\ell \geq s$, then there exist $g_t \in \F[x]$ for $t=s,\ldots,\ell$ such that
  \begin{IEEEeqnarray*}{lCl+l}
    \sum_{i=0}^t \rev{\Lambda^{s-i}\Omega^i} \left(\binom t i S\T{i,t} \right) &\equiv& 0 \mod x^{\varrho_t}
                  & \textrm{for } t=1, \ldots, s-1  \ , \\
    \sum_{i=0}^t \rev{\Lambda^{s-i}\Omega^i} \left(\binom t i S\T{i,t} x^{\iota_{i,t}} \right)  &\equiv& g_t \mod x^{\varrho_t}
                  & \textrm{for } t=s, \ldots, \ell \ ,
  \end{IEEEeqnarray*}
  where
  \begin{IEEEeqnarray*}{rCl}
    \deg g_t &\leq& \left\{ \begin{array}{ll}
                       \errs s - s     & \textrm{if } t = s \\
                       \errs s - 1     & \textrm{if } t > s
                       \end{array} \right.
                     \\
    \varrho_t &=& \left\{ \begin{array}{ll}
                       t(n-k)          & \textrm{if } t \leq s \\
                       sn - t(k-1) - 1 & \textrm{otherwise}
                       \end{array} \right.
                     \\
    \iota_{i,t} &=& \left\{ \begin{array}{ll}
                          0 & \textrm{if } t = s \\
                          i & \textrm{if } t > s
                       \end{array} \right. .
  \end{IEEEeqnarray*}
\end{theorem}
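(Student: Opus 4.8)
The plan is to obtain \cref{thm:power_key_syn} from \cref{thm:power_key} by the same mechanism that turns \cref{lem:key_relation} into the classical syndrome key equation in \cref{ssec:classical_ke}: apply the coefficient-reversal operator $\rev[D]{\cdot}$ and then divide out a power of $\rev{G}$. Two elementary facts about reversal are used throughout: for fixed $D$, $\rev[D]{\cdot}$ is $\F$-linear; and $\rev[a+b]{PQ}=\rev[a]{P}\,\rev[b]{Q}$ whenever $\deg P\le a$ and $\deg Q\le b$. Since $0$ is not an evaluation point we have $x\nmid G$, so $\rev{G}$ and all $\rev{G}^{j}$ are units in $\F[[x]]$; this is what both makes $S\T{i,t}=\rev{R\T{i,t}}/\rev{G}^{s-i}$ a well-defined power series and lets us cancel $\rev{G}^{s}$ modulo any power of $x$. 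The remainder of the proof is a degree count; what makes it go through is that in each equation of \cref{thm:power_key} all the right-hand summands already share a common degree bound.

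First I would treat $1\le t\le s$. Here \cref{thm:power_key} gives the exact identity $\Lambda^{s}f^{t}=\sum_{i}\binom t i\Lambda^{s-i}\Omega^{i}G^{i}R^{t-i}$, with $i$ running over $0,\dots,t$ (for $t=s$ the term dropped modulo $G^{s}$ reappears as the $i=s$ summand $\Omega^{s}G^{s}$). Each summand has degree at most $D_{t}\defeq s\errs+t(n-1)$, while $\deg\Lambda^{s}f^{t}\le s\errs+t(k-1)$, so reversing the identity at $D_{t}$ sends the left side to a multiple of $x^{D_{t}-s\errs-t(k-1)}=x^{t(n-k)}$, which is $\varrho_{t}$. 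On the right, the three factor-degree bounds $s\errs-i$, $in$, $(t-i)(n-1)$ sum exactly to $D_{t}$, so no padding shift occurs ($\iota_{i,t}=0$) and each summand reverses to $\binom t i\rev{\Lambda^{s-i}\Omega^{i}}\,\rev{G}^{i}\,\rev{R^{t-i}}$; since $t-i\le s-i$ we have $R\T{i,t}=R^{t-i}$, hence $\rev{R^{t-i}}=S\T{i,t}\rev{G}^{s-i}$, and the $\rev{G}$-powers gather into a common factor $\rev{G}^{s}$. Cancelling it yields $\sum_{i}\binom t i\rev{\Lambda^{s-i}\Omega^{i}}S\T{i,t}\equiv g_{t}\bmod x^{\varrho_{t}}$ with $g_{t}=0$ for $t<s$ and $g_{s}=-\rev{\Omega^{s}}$, of degree at most $s(\errs-1)=\errs s-s$.

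The case $s<t\le\ell$ is the delicate one. Now \cref{thm:power_key} only gives a congruence mod $G^{s}$, so I would first recover the exact identity: in $\Lambda^{s}f^{t}=\sum_{i}\binom t i\Lambda^{s}(f-R)^{i}R^{t-i}$ the $i$-th term equals $\binom t i\Lambda^{s-i}\Omega^{i}G^{i}R^{t-i}$ for $i\le s-1$ and $\binom t i\Omega^{i}\Upsilon^{i-s}G^{s}R^{t-i}$ for $i\ge s$ (both via $\Lambda(f-R)=\Omega G$, $\Upsilon=G/\Lambda$), the latter divisible by $G^{s}$. To expose $S\T{i,t}$ I replace each $R^{t-i}$ with $i\le s-1$ by $R\T{i,t}=R^{t-i}\bmod G^{s-i}$; the quotient contributes a further multiple of $G^{s-i}G^{i}=G^{s}$. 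Collecting every $G^{s}$-multiple into one polynomial $E_{t}$ gives $\Lambda^{s}f^{t}=\sum_{i=0}^{s-1}\binom t i\Lambda^{s-i}\Omega^{i}G^{i}R\T{i,t}+G^{s}E_{t}$, and a degree count---using the standing hypothesis $\ell(k-1)<sn$ to dominate $t(k-1)$ by $sn-1$---shows $\deg(G^{s}E_{t})\le sn+s\errs-1$, i.e.\ $\deg E_{t}\le s\errs-1$, which is the claimed bound on $\deg g_{t}$. Reversing this identity at $D_{t}\defeq s\errs+sn-1$: the left side becomes a multiple of $x^{D_{t}-s\errs-t(k-1)}=x^{sn-t(k-1)-1}=x^{\varrho_{t}}$; the term $G^{s}E_{t}$ reverses to $\rev{G}^{s}\,\rev[s\errs-1]{E_{t}}$; and the $i$-th main summand, whose factor bounds sum only to $s\errs+sn-i-1$, acquires a padding factor $x^{i}$---precisely $\iota_{i,t}=i$. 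Splitting the products, using $\rev{R\T{i,t}}=S\T{i,t}\rev{G}^{s-i}$, factoring out $\rev{G}^{s}$ and cancelling it gives $\sum_{i}\binom t i\rev{\Lambda^{s-i}\Omega^{i}}S\T{i,t}x^{i}\equiv g_{t}\bmod x^{\varrho_{t}}$ with $g_{t}=-\rev[s\errs-1]{E_{t}}$; the sum may be written up to $i=t$ as in the statement because $R\T{s,t}=R^{t-s}\bmod 1=0$ kills the $i=s$ term.

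The obstacle is purely one of bookkeeping: the reversal degrees $D_{t}$ must be chosen so that every summand lands at the intended $x$-order, one must detect exactly when a shift $x^{\iota_{i,t}}$ is forced, and---the only genuinely substantive point---one must check that the remainder absorbed into $g_{t}$ really has degree below $s\errs$, which is where both $\Lambda(f-R)=\Omega G$ and the parameter bound $\ell(k-1)<sn$ enter. No idea beyond \cref{thm:power_key} is needed, but an off-by-one in any $D_{t}$ would corrupt $\varrho_{t}$, $\iota_{i,t}$, or $\deg g_{t}$; the assumptions $x\nmid G$ and $s<n$ serve only to make the reductions and the power-series divisions legitimate.
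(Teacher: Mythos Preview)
Your proof is correct and follows essentially the same route as the paper: reverse the identities of \cref{thm:power_key} at a carefully chosen degree, then divide through by the unit $\rev{G}^{s}$. The only structural difference is that you fold $t=s$ into the exact-identity branch (making $g_{s}=-\rev{\Omega^{s}}$ explicit), whereas the paper treats $t=s$ together with $t>s$; both splits give the same $\varrho_{t}$, $\iota_{i,t}$ and degree bound. One small presentational slip: the claim ``since $t-i\le s-i$ we have $R\T{i,t}=R^{t-i}$'' fails at $i=s$, where $R\T{s,s}=R^{0}\bmod G^{0}=0\neq 1$, so the substitution $\rev{R^{t-i}}=S\T{i,t}\,\rev{G}^{s-i}$ is valid only for $i<s$; you evidently handle $i=s$ separately anyway, since that term is precisely your $g_{s}$, but the text should say so.
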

\begin{proof}
  We need to distinguish between two cases: $t < s$ and $t \geq s$.
  Assume first $t<s$.
  Since $R\T{i,t} = R^{t-i}$, \cref{thm:power_key} gives us
  \begin{IEEEeqnarray*}{rCl+l}
    \textstyle \sum_{i=0}^t \big(\Lambda^{s-i}\Omega^i\big)\big(\tbinom t i R\T{i,t} G^i\big)  &=& \Lambda^s f^t  & \iff \\[3pt]
    \textstyle \rev[\errs s + t(n-1)]{\sum_{i=0}^t \big(\Lambda^{s-i}\Omega^i\big)\big(\tbinom t i R\T{i,t} G^i\big)} &=& \rev[\errs s + t(n-1)]{\Lambda^s f^t} \ ,
  \end{IEEEeqnarray*}
  where $\errs s + t(n-1)$ arise from counting the degree upper bound on the left-hand side.
  Every term in the sum has the same degree bound, so we get
  \begin{IEEEeqnarray*}{rCl+l}
    \textstyle \sum_{i=0}^t \rev{\Lambda^{s-i}\Omega^i}\big(\tbinom t i \rev{R\T{i,t}} \rev G^i\big) &=& \rev{\Lambda^s f^t} x^{t(n-k)} &\implies \\
    \textstyle \sum_{i=0}^t \rev{\Lambda^{s-i}\Omega^i}\big(\tbinom t i \rev{R\T{i,t}} \rev G^i\big) &\equiv& 0 \mod x^{t(n-k)} &\iff \\
    \textstyle \sum_{i=0}^t \rev{\Lambda^{s-i}\Omega^i}\big(\tbinom t i S\T{i,t} \big) &\equiv& 0 \mod x^{t(n-k)} \ ,
  \end{IEEEeqnarray*}
  where the last line follows from $\rev G^s$ being invertible modulo $x^{t(n-k)}$.
  This concludes the case $t < s$.

  For the case $t \geq s$, we proceed similarly.
  In the congruence of \cref{thm:power_key}, we can readily replace $R^{t-i}G^i$ with $R\T{i,t}G^i$ modulo $G^s$.
  This gives us:
  \begin{IEEEeqnarray*}{rCl+l}
    \Lambda^s f^t & \equiv & \sum_{i=0}^{s-1} \big(\Lambda^{s-i}\Omega^i\big)\left(\binom t i R\T{i,t} G^i\right)  \mod G^s  &\implies \\
    \Lambda^s f^t + \rev{g_t} G^s & = & \sum_{i=0}^{s-1} \big(\Lambda^{s-i}\Omega^i\big)\left(\binom t i R\T{i,t} G^i\right) \ ,
  \end{IEEEeqnarray*}
  for some $\rev{g_t} \in \F[x]$.
  The degree of the right-hand side is bounded as
  \[
  \max_i\big\{ (s-i)\errs + i(\errs - 1) + \deg R\T{i,t} + in \big\}
     \leq \left\{
       \begin{array}{ll}
         s\errs + s(n-1) & \textrm{if } t=s \\
         s\errs + sn-1   & \textrm{if } t>s
       \end{array}
       \right.
       \ .
  \]
  This immediately bounds $\deg g_t$ as the theorem states.
  Note that the above equals $\varrho_t + s\errs + t(k-1)$ in all cases.
  We can now reverse the equation as in the previous case.
  When $t > s$ then the degree bound on the summands are not all the same, so we must add powers of $x$ in the reversed expression:
  \begin{IEEEeqnarray*}{rCl+l}
    \rev{\Lambda^s f^t} x^{\varrho_t} + g_t \rev G^s &=& \sum_{i=0}^{s-1} \rev{\Lambda^{s-i}\Omega^i} \left(\binom t i \rev{R\T{i,t}} \rev G^i x^{\iota_{i,t}} \right)  &\implies \\
    g_t \rev G^s & \equiv & \sum_{i=0}^{s-1} \rev{\Lambda^{s-i}\Omega^i} \left(\binom t i \rev{R\T{i,t}} \rev G^i x^{\iota_{i,t}} \right)  \mod x^{\varrho_t} &\iff \\
    g_t & \equiv & \sum_{i=0}^{s-1} \rev{\Lambda^{s-i}\Omega^i} \left(\binom t i S\T{i,t} x^{\iota_{i,t}} \right)  \mod x^{\varrho_t}
     \ .
  \end{IEEEeqnarray*}
\end{proof}

\begin{remark}
  Just as we remarked that the key equations of \cref{thm:power_key} resemble certain characterisations of Interpolation polynomials in Guruswami--Sudan, so does the above syndrome formulation resemble the syndrome Interpolation key equations of \cite{zeh_interpolation_2011}.
  Again, the deeper relation between the error locator approach and the Guruswami--Sudan is unclear.
\end{remark}

\cref{thm:power_key_syn} leads to a decoding algorithm in the very same way as \cref{thm:power_key}.
We could call these algorithms ``Power syndromes'' and ``Power Gao'' respectively.
We have the following important remark:

\begin{corollary}
  Decoding using Power Gao succeeds if and only if decoding using Power syndromes succeeds.
\end{corollary}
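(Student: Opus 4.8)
The plan is to follow the template of the $s=1$ case treated in \cite{nielsen_power_2014}: exhibit a bijection, given by the reversal operator, between the solution set of the linearised Power Gao problem (\cref{prob:ke_linearised}) and the solution set of the analogous linearisation of \cref{thm:power_key_syn}, and check that this bijection carries the genuine error-locator solution $\vec\Lambda$ to the genuine solution of the syndrome problem. Since both decoders are declared to \emph{succeed} precisely when the minimal solution of their respective linearised problem is the genuine one, the equivalence follows at once.

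First I would write down the linearised syndrome problem explicitly, in parallel with \cref{prob:ke_linearised}: unknowns $\hat\lambda_0,\dots,\hat\lambda_{s-1}$ (playing the role of $\rev{\Lambda^{s-i}\Omega^i}$) and, for $t=s,\dots,\ell$, unknowns $\hat g_t$ (playing the role of $g_t$), subject to the congruences of \cref{thm:power_key_syn} modulo the $x^{\varrho_t}$ together with the degree bounds that the genuine data satisfy ($\deg\hat g_t$ bounded in terms of $\deg\hat\lambda_0$ as in the theorem, and $\deg\hat\lambda_i\le\deg\hat\lambda_0-i$ mirroring item~2 of \cref{prob:ke_linearised}), and seeking $\hat\lambda_0$ monic of minimal degree. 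Next I would define the map $\Phi$ that reverses every component at the appropriate a~priori degree bound and verify it is a bijection onto this set. The crucial observation is that the chain of manipulations in the proof of \cref{thm:power_key_syn} — reversing item~1a at degree $\errs s+t(n-1)$, writing item~1b as an equality after adding the quotient multiple of $G^s$ (whose reversal becomes $\hat g_t$) and then reversing, and finally stripping off $\rev G^s$, which is a unit modulo each $x^{\varrho_t}$ since $x\nmid G$ — uses only the defining relations of a solution, not that the polynomials are the true error-locator quantities. Each step is invertible (multiply back by $\rev G^s$, reverse again, re-absorb the $G^s$-multiple), so $\Phi$ is a bijection; and because reversal at a fixed bound $\beta$ turns ``$\deg p\le\beta$'' into the companion bound on $\rev[\beta]{p}$, the degree inequalities of \cref{prob:ke_linearised} transport to exactly those of the syndrome problem, the $x^{\iota_{i,t}}$ and $x^{\varrho_t}$ shifts recorded in \cref{thm:power_key_syn} being precisely the bookkeeping this requires.

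Finally, since $0$ is not an evaluation point we have $\Lambda(0)\neq0$, so $\rev{\Lambda^s}$, reversed at degree $s\errs$, again has degree exactly $s\errs$, and $\Phi$ maps $\vec\Lambda$ to the genuine syndrome solution of the same first-component degree. Consequently a ``bad'' Gao solution — one with $\lambda_0\neq\Lambda^s$ and $\deg\lambda_0\le s\errs$ — exists if and only if a bad syndrome solution exists, which is exactly the asserted equivalence of success and failure. The main obstacle I anticipate is the reversal-degree bookkeeping: one must fix, uniformly in $i$ and $t$, bounds at which to reverse each $\lambda_i$, $\psi_t$ and quotient term so that reversal is lossless, compatible across all $\ell$ equations, and transports ``minimal $\deg\lambda_0$'' to ``minimal $\deg\hat\lambda_0$'', which forces one to track orders of vanishing at $0$. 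This is the same technical wrinkle already present for $s=1$ in \cite{nielsen_power_2014}; the assumptions $0\notin\{\alpha_i\}$ and $s<n$ made at the start of \cref{sec:syndrome} are what make it go through.
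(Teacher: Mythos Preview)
Your proposal is correct and follows essentially the same approach as the paper: the paper's proof is a one-sentence sketch stating that the transformation used in the proof of \cref{thm:power_key_syn} carries solutions of the linearised Power Gao equations to solutions of the linearised Power syndrome equations and back, which is exactly the bijection $\Phi$ you describe. Your version is simply more explicit about the details---the reversal bookkeeping, invertibility of each step via the unit $\rev G^s$, and the preservation of the first-component degree because $\Lambda(0)\neq 0$---all of which the paper leaves implicit under ``follows easily.''
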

\begin{proof}
  This follows easily by the same transformation as in the proof of \cref{thm:power_key_syn}: a solution to the linearised key equations of Power Gao induces a solution to the linearised key equations of Power syndromes, and vice versa.
\end{proof}

Thus the two decoding algorithms have exactly the same decoding performance.

We won't go through the details of a syndrome-decoding analogue of \cref{lem:keyeq_of_pade}, but it follows along exactly the lines we have seen in both \cref{sec:pade} and \cref{sec:reencoding}.
The asymptotic complexity is again the same as that of \cref{cor:keyeq_of_pade}, but one would again expect a noticeable, constant factor speed-up very similar to that of the re-encoding transformation.

\fi

\section{Conclusion}

We demonstrated how the Power decoding technique for Reed--Solomon codes can be augmented with a new parameter---the multiplicity---to attain the Johnson decoding radius \cite{johnson_new_1962}.
The resulting decoder is, as the original Power decoding algorithm of Schmidt, Sidorenko and Bossert \cite{schmidt_syndrome_2010}, a partial decoder which fails for a few error patterns within its decoding radius.
We showed how one can efficiently solve the resulting key equations using existing algorithms for simultaneous Hermite \Pade approximation problems.

The proposed decoder has applications as a simpler alternative to the Guruswami--Sudan algorithm---especially for hardware implementations---due to its simple one-step shift-register type structure.
In particular, for medium rate codes, Power decoding with a low multiplicity of 2 or 3 would not be much more complicated to implement than half-the-minimum distance decoding while still offering a significant improvement in decoding performance.
For such parameters, the root-finding step of the Guruswami--Sudan algorithm would have a significant circuit area and latency.

The exact failure behaviour of the decoding method remains largely open.
For $s=1$, i.e.~the original Power decoding, the failure probability has previously been bounded only for $\ell=2,3$.
The case $s>1$ seems no easier to analyse:
\cref{prop:power_gao_inv} simplifies the equations one needs to analyse, and this was instrumental in the case for which we were able to bound the failure probability: $(s,\ell) = (2,3)$.
For these parameters, the decoding radius improves upon the case $s=1$ whenever the rate is within $]1/6 ; 1/2[$.
The claimed decoding radius of the decoder for other parameters was backed by simulations on a range of codes: this demonstrates a failure probability which seems to decay exponentially as the number of errors is reduced.

\ifExtended
We also discussed two variants of the decoding method which reduces the cost in practice: re-encoding and a syndrome formulation.
Either method roughly replaces the complexity dependency on $n$ with $n-k$.
More detailed analysis, and concrete choices of basis reduction algorithms is necessary to determine which one is fastest in practice.
\else
We also discussed how re-encoding can directly be applied to bring a performance improvement in practice: roughly, it will replace the $n$ by $n-k$ in the complexity expressions.
\fi

The proposed decoding algorithm has already been adapted to improve decoding of Interleaved RS codes \cite{puchinger_decoding_2017}.
Power decoding has previously been applied to other codes as well, e.g.~Complex RS codes \cite{mohamed_deterministic_2015}, and it seems clear that the proposed addition of multiplicities can aid those applications as well.
Another interesting question is to extend Power decoding to soft-decision decoding, similar to K\"otter--Vardy's variant of  the Guruswami--Sudan algorithm \cite{kotter_algebraic_2003}.

\section{Acknowledgements}

The author would like to thank Vladimir Sidorenko, Martin Bossert and Daniel Augot for discussions on Power decoding and this paper.
The author gratefully acknowledges the support of the Digiteo foundation, project \href{http://idealcodes.github.io/}{\color{navy}{IdealCodes}} while he was with Inria, and also, while the author was with Ulm University, the support of the German Research Council “Deutsche Forschungsgemeinschaft” (DFG) under grant BO 867/22-1.

\bibliographystyle{abbrv}
\bibliography{bibtex,add}

\begin{thebibliography}{10}

\bibitem{ahmed_vlsi_2011}
A.~Ahmed, R.~Koetter, and N.~R. Shanbhag.
\newblock {VLSI} {Architectures} for {Soft}-{Decision} {Decoding} of
  {Reed}-{Solomon} {Codes}.
\newblock {\em {IEEE} Trans. Inf. Theory}, 57(2):648--667, Feb. 2011.

\bibitem{alekhnovich_linear_2005}
M.~Alekhnovich.
\newblock Linear {Diophantine} {Equations} {Over} {Polynomials} and {Soft}
  {Decoding} of {Reed}–{Solomon} {Codes}.
\newblock {\em {IEEE} Trans. Inf. Theory}, 51(7):2257--2265, July 2005.

\bibitem{baker_pade_1996}
G.~Baker and P.~Graves-Morris.
\newblock {\em {Padé} approximants}, volume~59.
\newblock Cambridge Univ. Press, 1996.

\bibitem{barel_general_1992}
M.~V. Barel and A.~Bultheel.
\newblock A general module theoretic framework for vector {M}-{Padé} and
  matrix rational interpolation.
\newblock {\em Numerical Algorithms}, 3(1):451--461, Dec. 1992.

\bibitem{beckermann_uniform_1994}
B.~Beckermann and G.~Labahn.
\newblock A {Uniform} {Approach} for the {Fast} {Computation} of
  {Matrix}-{Type} {Padé} {Approximants}.
\newblock {\em {SIAM} J. Matr. Anal. Appl.}, 15(3):804--823, July 1994.

\bibitem{beelen_key_2010}
P.~Beelen and K.~Brander.
\newblock Key equations for list decoding of {Reed}–{Solomon} codes and how
  to solve them.
\newblock {\em J. Symb. Comp.}, 45(7):773--786, 2010.

\bibitem{beelen_rational_2013}
P.~Beelen, T.~Høholdt, J.~S.~R. Nielsen, and Y.~Wu.
\newblock On {Rational} {Interpolation}-{Based} {List}-{Decoding} and
  {List}-{Decoding} {Binary} {Goppa} {Codes}.
\newblock {\em {IEEE} Trans. Inf. Theory}, 59(6):3269--3281, June 2013.

\bibitem{berlekamp_algebraic_1968}
E.~R. Berlekamp.
\newblock {\em Algebraic {Coding} {Theory}}.
\newblock Aegean Park Press, 1968.

\bibitem{bostan_solving_2008}
A.~Bostan, C.-P. Jeannerod, and E.~Schost.
\newblock Solving structured linear systems with large displacement rank.
\newblock {\em Th. Comp. Sc.}, 407(1–3):155--181, Nov. 2008.

\bibitem{chowdhury_faster_2015}
M.~Chowdhury, C.-P. Jeannerod, V.~Neiger, E.~Schost, and G.~Villard.
\newblock Faster {Algorithms} for {Multivariate} {Interpolation} {With}
  {Multiplicities} and {Simultaneous} {Polynomial} {Approximations}.
\newblock {\em {IEEE} Trans. Inf. Theory}, 61(5):2370--2387, May 2015.

\bibitem{cohn_ideal_2010}
H.~Cohn and N.~Heninger.
\newblock Ideal forms of {Coppersmith}'s theorem and {Guruswami}–{Sudan} list
  decoding.
\newblock {\em arXiv}, 1008.1284, 2010.

\bibitem{cohn_approximate_2012}
H.~Cohn and N.~Heninger.
\newblock Approximate common divisors via lattices.
\newblock {\em Proc. of ANTS}, 1(1):271--293, 2012.

\bibitem{fitzpatrick_key_1995}
P.~Fitzpatrick.
\newblock On the {Key} {Equation}.
\newblock {\em {IEEE} Trans. Inf. Theory}, 41(5):1290--1302, 1995.

\bibitem{gao_new_2003}
S.~Gao.
\newblock A {New} {Algorithm} for {Decoding} {Reed}-{Solomon} {Codes}.
\newblock In {\em Communications, Information and Network Security}, number 712
  in S. Eng. and Comp. Sc., pages 55--68. Springer, Jan. 2003.

\bibitem{giorgi_complexity_2003}
P.~Giorgi, C.~Jeannerod, and G.~Villard.
\newblock On the {Complexity} of {Polynomial} {Matrix} {Computations}.
\newblock In {\em Proc. of ISSAC}, pages 135--142, 2003.

\bibitem{gupta_triangular_2012}
S.~Gupta, S.~Sarkar, A.~Storjohann, and J.~Valeriote.
\newblock Triangular -basis decompositions and derandomization of linear
  algebra algorithms over.
\newblock {\em J. Symb. Comp.}, 47(4):422--453, Apr. 2012.

\bibitem{guruswami_improved_1999}
V.~Guruswami and M.~Sudan.
\newblock Improved {Decoding} of {Reed}--{Solomon} {Codes} and
  {Algebraic}-{Geometric} {Codes}.
\newblock {\em {IEEE} Trans. Inf. Theory}, 45(6):1757--1767, 1999.

\bibitem{johnson_new_1962}
S.~M. Johnson.
\newblock A {New} {Upper} {Bound} for {Error}-{Correcting} {Codes}.
\newblock {\em {IEEE} Trans. Inf. Theory}, 46:203--207, 1962.

\bibitem{kailath_linear_1980}
T.~Kailath.
\newblock {\em Linear {Systems}}.
\newblock Prentice-Hall, 1980.

\bibitem{kotter_algebraic_2003}
R.~Kötter and A.~Vardy.
\newblock Algebraic {Soft}-{Decision} {Decoding} of {Reed}-{Solomon} {Codes}.
\newblock {\em {IEEE} Trans. Inf. Theory}, 49(11):2809--2825, 2003.

\bibitem{kotter_complexity_2003}
R.~Kötter and A.~Vardy.
\newblock A {Complexity} {Reducing} {Transformation} in {Algebraic} {List}
  {Decoding} of {Reed}–{Solomon} {Codes}.
\newblock In {\em Proc. of {IEEE} ITW}, 2003.

\bibitem{lee_list_2008}
K.~Lee and M.~E. O'Sullivan.
\newblock List {Decoding} of {Reed}–{Solomon} {Codes} from a {Gröbner}
  {Basis} {Perspective}.
\newblock {\em J. Symb. Comp.}, 43(9):645 -- 658, 2008.

\bibitem{lee_list_2009}
K.~Lee and M.~E. O'Sullivan.
\newblock List decoding of {Hermitian} codes using {Gröbner} bases.
\newblock {\em J. Symb. Comp.}, 44(12):1662--1675, 2009.

\bibitem{li_decoding_2014}
W.~Li, J.~S.~R. Nielsen, and V.~R. Sidorenko.
\newblock On {Decoding} of {Interleaved} {Chinese} {Remainder} {Codes}.
\newblock In {\em Proc. of MTNS}, 2014.
\newblock Extended abstract.

\bibitem{mohamed_deterministic_2015}
M.~Mohamed, S.~Rizkalla, H.~Zoerlein, and M.~Bossert.
\newblock Deterministic {Compressed} {Sensing} with {Power} {Decoding} for
  {Complex} {Reed}-{Solomon} {Codes}.
\newblock In {\em Proc. of SCC}, 2015.

\bibitem{mulders_lattice_2003}
T.~Mulders and A.~Storjohann.
\newblock On {Lattice} {Reduction} for {Polynomial} {Matrices}.
\newblock {\em J. Symb. Comp.}, 35(4):377--401, 2003.

\bibitem{neiger_fast_2016}
V.~Neiger.
\newblock Fast computation of shifted {Popov} forms of polynomial matrices via
  systems of modular polynomial equations, Feb. 2016.

\bibitem{neiger_fast_2017}
V.~Neiger, J.~Rosenkilde, and E.~Schost.
\newblock Fast {Computation} of the {Roots} of {Polynomials} {Over} the {Ring}
  of {Power} {Series}.
\newblock July 2017.

\bibitem{nielsen_sub-quadratic_2015}
J.~Nielsen and P.~Beelen.
\newblock Sub-{Quadratic} {Decoding} of {One}-{Point} {Hermitian} {Codes}.
\newblock {\em {IEEE} Trans. Inf. Theory}, 61(6):3225--3240, June 2015.

\bibitem{nielsen_generalised_2013}
J.~S.~R. Nielsen.
\newblock Generalised {Multi}-sequence {Shift}-{Register} {Synthesis} using
  {Module} {Minimisation}.
\newblock In {\em Proc. of IEEE ISIT}, pages 882--886, 2013.

\bibitem{nielsen_list_2013}
J.~S.~R. Nielsen.
\newblock {\em List {Decoding} of {Algebraic} {Codes}}.
\newblock PhD thesis, Technical University of Denmark, 2013.

\bibitem{nielsen_power_2014}
J.~S.~R. Nielsen.
\newblock Power {Decoding} of {Reed}–{Solomon} {Codes} {Revisited}.
\newblock In {\em Proc. of ICMCTA}, Sept. 2014.

\bibitem{nielsen_power_2014-1}
J.~S.~R. Nielsen.
\newblock Power {Decoding} of {Reed}–{Solomon} {Up} to the {Johnson}
  {Radius}.
\newblock In {\em Proc. of ACCT}, Sept. 2014.

\bibitem{nielsen_decoding_1998}
R.~R. Nielsen and T.~Høholdt.
\newblock Decoding {Reed}–{Solomon} codes beyond half the minimum distance.
\newblock In {\em Coding Theory, Cryptography and Related Areas}, pages
  221--236. Springer, 1998.

\bibitem{olesh_vector_2006}
Z.~Olesh and A.~Storjohann.
\newblock The vector rational function reconstruction problem.
\newblock In {\em Proc. of WWCA}, pages 137--149, 2006.

\bibitem{puchinger_decoding_2017}
S.~Puchinger and J.~R.~n. Nielsen.
\newblock Decoding of interleaved {Reed}-{Solomon} codes using improved power
  decoding.
\newblock In {\em 2017 {IEEE} {International} {Symposium} on {Information}
  {Theory} ({ISIT})}, pages 356--360, June 2017.

\bibitem{rosenkilde_algorithms_2018}
J.~Rosenkilde and A.~Storjohann.
\newblock Algorithms for {Simultaneous} {Hermite} {Padé} {Approximations}.
\newblock In preparation.
\newblock Extension of \cite{rosenkilde_ne_nielsen_algorithms_2016}.

\bibitem{rosenkilde_ne_nielsen_algorithms_2016}
J.~Rosenkilde~né Nielsen and A.~Storjohann.
\newblock Algorithms for {Simultaneous} {Padé} {Approximations}.
\newblock In {\em Proceedings of the {ACM} on {International} {Symposium} on
  {Symbolic} and {Algebraic} {Computation}}, {ISSAC} '16, pages 405--412. ACM,
  2016.

\bibitem{roth_introduction_2006}
R.~Roth.
\newblock {\em Introduction to {Coding} {Theory}}.
\newblock Cambridge Univ. Press, 2006.

\bibitem{roth_efficient_2000}
R.~Roth and G.~Ruckenstein.
\newblock Efficient {Decoding} of {Reed}--{Solomon} {Codes} {Beyond} {Half} the
  {Minimum} {Distance}.
\newblock {\em {IEEE} Trans. Inf. Theory}, 46(1):246 --257, 2000.

\bibitem{schmidt_decoding_2006}
G.~Schmidt, V.~Sidorenko, and M.~Bossert.
\newblock Decoding {Reed}-{Solomon} {Codes} {Beyond} {Half} the {Minimum}
  {Distance} {Using} {Shift}-{Register} {Synthesis}.
\newblock In {\em Proc. of IEEE ISIT}, pages 459--463, 2006.

\bibitem{schmidt_syndrome_2010}
G.~Schmidt, V.~Sidorenko, and M.~Bossert.
\newblock Syndrome {Decoding} of {Reed}-{Solomon} {Codes} {Beyond} {Half} the
  {Minimum} {Distance} {Based} on {Shift}-{Register} {Synthesis}.
\newblock {\em {IEEE} Trans. Inf. Theory}, 56(10):5245--5252, 2010.

\bibitem{sidorenko_fast_2014}
V.~Sidorenko and M.~Bossert.
\newblock Fast skew-feedback shift-register synthesis.
\newblock {\em Designs, Codes and Cryptography}, 70(1-2):55--67, Jan. 2014.

\bibitem{sidorenko_linear_2011}
V.~Sidorenko and G.~Schmidt.
\newblock A {Linear} {Algebraic} {Approach} to {Multisequence}
  {Shift}-{Register} {Synthesis}.
\newblock {\em Problems of Information Transmission}, 47(2):149--165, 2011.

\bibitem{stein_sagemath_????}
W.~A. Stein et~al.
\newblock {SageMath} {Software}.
\newblock http://www.sagemath.org.

\bibitem{storjohann_high-order_2003}
A.~Storjohann.
\newblock High-order lifting and integrality certification.
\newblock {\em J. Symb. Comp.}, 36(3):613--648, 2003.

\bibitem{sudan_decoding_1997}
M.~Sudan.
\newblock Decoding of {Reed}--{Solomon} {Codes} beyond the {Error}-{Correction}
  {Bound}.
\newblock {\em J. Complexity}, 13(1):180--193, 1997.

\bibitem{sugiyama_method_1975}
Y.~Sugiyama, M.~Kasahara, S.~Hirasawa, and T.~Namekawa.
\newblock A {Method} for {Solving} {Key} {Equation} for {Decoding} {Goppa}
  {Codes}.
\newblock {\em Information and Control}, 27(1):87--99, 1975.

\bibitem{trifonov_efficient_2012}
P.~Trifonov and M.~Lee.
\newblock Efficient {Interpolation} in the {Wu} {List} {Decoding} {Algorithm}.
\newblock {\em {IEEE} Trans. Inf. Theory}, 58(9):5963--5971, 2012.

\bibitem{von_zur_gathen_modern_2012}
J.~von~zur Gathen and J.~Gerhard.
\newblock {\em Modern {Computer} {Algebra}}.
\newblock Cambridge Univ. Press, 3rd edition, 2012.

\bibitem{wachter-zeh_decoding_2012}
A.~Wachter-Zeh, A.~Zeh, and M.~Bossert.
\newblock Decoding interleaved {Reed}–{Solomon} codes beyond their joint
  error-correcting capability.
\newblock {\em Designs, Codes and Cryptography}, 71(2):261--281, July 2012.

\bibitem{wu_new_2008}
Y.~Wu.
\newblock New {List} {Decoding} {Algorithms} for {Reed}-{Solomon} and {BCH}
  {Codes}.
\newblock {\em {IEEE} Trans. Inf. Theory}, 54(8):3611--3630, 2008.

\bibitem{zeh_interpolation_2011}
A.~Zeh, C.~Gentner, and D.~Augot.
\newblock An {Interpolation} {Procedure} for {List} {Decoding} {Reed}-{Solomon}
  {Codes} {Based} on {Generalized} {Key} {Equations}.
\newblock {\em {IEEE} Trans. Inf. Theory}, 57(9):5946--5959, 2011.

\bibitem{zeh_unambiguous_2012}
A.~Zeh, A.~Wachter, and M.~Bossert.
\newblock Unambiguous {Decoding} of {Generalized} {Reed}--{Solomon} {Codes}
  {Beyond} {Half} the {Minimum} {Distance}.
\newblock In {\em Proc. of IZS}, 2012.

\end{thebibliography}

\end{document}               